\newtheorem{thm}{Theorem}[section]
\newtheorem{prop}[thm]{Proposition}
\newtheorem{cor}[thm]{Corollary}
\theoremstyle{plain}
\newtheorem{lem}[thm]{Lemma}
\theoremstyle{definition}
\theoremstyle{remark}
\newtheorem{rmq}{Remark}
\numberwithin{equation}{section}
\newcommand{\tr}{\operatorname{tr}}
\newcommand{\R}{\mathbb{R}}
\newcommand{\pp}{\mathbb{P}}
\newcommand{\pprho}[1]{\pp^{#1}}
\newcommand{\ee}{\mathbb{E}}
\newcommand{\eerho}[1]{\ee^{#1}}
\newcommand{\cD}{\mathcal{D}}
\newcommand{\cF}{\mathcal{F}}
\newcommand{\ii}{\mathrm{i}}
\newcommand{\dd}{\mathrm{d}}
\newcommand{\e}{\mathrm{e}}
\newcommand{\Lin}{\mathcal{L}}
\newcommand{\Id}{\mathrm{Id}}
\renewcommand{\and}{\mbox{ and }}
\newcommand{\filtreP}{\big(\Omega,(\cF_t)_t,\cF_\infty,\pp\big)}
\title{On stability of quantum trajectories and their Cesaro mean}
\author{Nina H. Amini , Ma\"el Bompais and Cl\'ement Pellegrini}
\thanks{Nina H. Amini and Ma\"{e}l Bompais are with Laboratoire des Signaux et Syst\`emes, CNRS - CentraleSup\'elec - Univ. Paris-Sud, Universit\'e Paris-Saclay, 3, rue Jo liot Curie, 91192, Gif-sur-Yvette, France.
 {\tt\small first name.family name@centralesupelec.fr}}
\thanks{Cl\'ement Pellegrini is with Institut de Math\'ematiques, IMT, Universit\'e de Toulouse (UMR 5219), 31062 Toulouse, Cedex 9, France 
        {\tt\small clement.pellegrini@math.univ-toulouse.fr}}%
\begin{document}
	
\begin{abstract}{We address the question of stability of quantum trajectories, also referred as quantum filters}. We determine the limit of the quantum fidelity between the true quantum trajectory and the {estimated one}. Under a purification assumption we show that this limit equals to one meaning that quantum filters are stable. In the general case, under an identifiability and a spectral assumption we show that the limit of the Cesaro mean of the estimated trajectory is the same as the true one.
\end{abstract}

\maketitle

\setcounter{tocdepth}{1}
\tableofcontents
\section{Introduction}

Quantum trajectories describe the evolution of an open quantum system \cite{Davies} undergoing indirect measurements \cite{BRE02,wisemanmilburn,Haroche,BarchielliGregoratti09,gisin,diosi,belavkin89,bouten}. {The setup consists of a quantum system interacting with an environment.} After interaction, the environment is observed and following the measurement record, one can infer the evolution of the system. Typically, the evolution of $(\rho_t)$, i.e., a quantum trajectory of the quantum system is random taking into account the back-action of the measurement. Generic models describing $(\rho_t)$ are stochastic differential equations called stochastic master equations driven by jump-diffusion processes \cite{barchielliholevo,BarchielliGregoratti09,pellegrini3}. In absence of measurements, the evolution of the system is described by a Lindblad operator \cite{Davies,lindblad,gorini}. 

 Recently these models have attracted a lot of investigations studying the large time behavior of quantum trajectories. In \cite{bauer2011convergence,bbbqnd} models of quantum non demolition which are at the cornerstone of the recent experiment of Serge Haroche's team are studied in details. In particular in \cite{bauer2011convergence,bbbqnd,Bauer_2012}, conditioning and martingale techniques are used to obtain sharp results concerning the large time behavior. These techniques have been used also in the continuous time setup in \cite{MR3238528}. Generalization of these approaches are developed in \cite{MR3768183} for quantum parameter estimation and in progress in \cite{BBP2020} {concerning a law of large number, a central limit theorem and a large deviation principle}. In \cite{MR3947325,MR4205230}, results on invariant measure and convergence toward the stationary regime {are} obtained under a purification 
and an ergodicity assumptions. Recent contributions, namely entropy production considerations, are developed in \cite{MR3764564,MR4222567} using ergodic theory and thermodynamic formalism. {It is worth noticing} that the first relevant results in the long time behavior of quantum trajectories were the purification result \cite{HMpuri} and an ergodic theorem \cite{HMergo}. 


In control theory, quantum trajectories are referred as quantum filters. The theory has been initiated by E. B. Davies in the 1960s~\cite{davies1969quantum,davies1976quantum} and further developed by Belavkin in the 1980s~\cite{belavkin1983theory,belavkin1989nondemolition,belavkin1995quantum,belavkin1992quantum}. The quantum filters are applied in the design of a state-based feedback. In particular, stabilization of pure states by a state-based feedback is central in advancing quantum technologies. In~\cite{sayrin2011real,amini2012stabilization,amini2013feedback}, stabilizations of quantum systems undergoing discrete-time non-demolition measurements are considered. Regarding stabilization of continuous-time quantum filters, see e.g.,~\cite{mirrahimi2007stabilizing,van2005feedback,liang2019exponential,cardona2020exponential,MR3649451,MR3006708,9375489}.   Roughly speaking, quantum filters should take into account different physical imperfections such as unknown initial states and physical parameters, which are present in real experiments. In~\cite{liang2020robustness_N}, the robustness of feedback strategies proposed in~\cite{liang2019exponential} with respect to such imperfections has been addressed. 

In this article, we investigate stability of quantum filters. When the initial state $\rho_0$ of the quantum trajectory $(\rho_t)$ is unknown, we construct an estimated filter by guessing the initial state $\hat\rho_0$. Then one uploads an estimated trajectory $(\hat\rho_t)$ following the results of the measurements which are the only data accessible. This estimated trajectory $(\hat\rho_t)$ evolves as if it was the true trajectory $(\rho_t)$ but starting with a guessed initial state. The natural question is whether the trajectories become closer and closer by acquiring more and more results after measurements. Does the distance between the true trajectory and the estimated trajectory converges to zero in long time? Convergence results have been already investigated. In \cite{MR2509990}, it was shown that observability provides sufficient condition to ensure stability of quantum filters. In \cite{bbbqnd,MR3238528}, convergence for quantum non demolition models are obtained. The natural distance in this context is the fidelity and in \cite{MR3255126,MR2884021}, it has been shown  that in average this fidelity increases.\footnote{The fidelity $F(\rho,\sigma)$ satisfies that $0\leq F(\rho,\sigma)\leq 1$ and $F(\rho,\sigma)=1$ if and only if $\rho=\sigma$} In this article, we study the limit of the fidelity between $(\rho_t)$ and $(\hat\rho_t)$, we show that this fidelity can be expressed in terms of a particular martingale which is convergent. This martingale expressed in terms of the operator generating the outputs of measurements, is at the cornerstone of the works \cite{MR3947325,MR4205230}. In particular we obtain an expression for the limit of this fidelity when $t$ goes to infinity. This allows to show that under a purification condition this limit of fidelity is one meaning that the quantum filters are stable. We show that it is not the case in general. In the general case, where the limit fidelity is not one we investigate a mean ergodic theorem. For a true trajectory it is known that the Cesaro mean converges toward a random variable with values on the set of invariant states of the evolution without measurement \cite{HMergo}. We address the same question for the estimated trajectory. {Under an absolutely continuity condition, we show that the Cesaro mean of estimated trajectory converges also toward an invariant state.} When there are more than one invariant state the limit of the Cesaro mean for the estimated trajectory is in general different than the one of the Cesaro mean of the true trajectory. It is then natural to present sufficient conditions where the two limits coincide. Under an identifiability assumption as well as a spectral assumption we show that the two limits coincide. This uses the conditioning techniques of \cite{bbbqnd}.

The article is structured as follows. Section 2 is devoted to the presentation of the true trajectory and the estimated one. In particular, {we present an absolute continuity between initial states which is an essential ingredient of the paper.} Section 3 concerns the fidelity between the true and estimated trajectories. We study the limit of this quantity when $t$ goes to infinity. Section 4 is devoted to the Cesaro mean considerations. Some technical proofs are devoted into an appendix.

\section{Quantum Trajectories}\label{model}

\subsection{Construction of quantum trajectories}

In this section, we recall some basic construction of continuous time quantum trajectories \cite{BarchielliGregoratti09}. The underlying quantum system is described by $\mathcal H=\mathbb C^k$ and the set of density matrices is denoted by $$\mathcal D_k=\{\rho\in\mathcal M_k(\mathbb C), \rho=\rho^*,\rho\geq0,\tr(\rho)=1\}.$$ Within the construction of quantum trajectories, we present the evolution of the estimated filter. The true quantum trajectory will be denoted by $(\rho_t)$ whereas the estimated filter will be denoted by $(\hat \rho_t)$.

We consider a filtered probability space $\filtreP$. On this space we consider $(S_t)_{t\in[0,\infty)}$ the solution to the following stochastic differential equation (SDE for short):
\begin{equation}\label{eq_defS}
\dd S_{t}= \big(K+{\textstyle {\frac{n-p}2}}\,{\Id}\big)S_{t-}\,\dd t+\sum_{i=1}^p L_iS_{t-}\,\dd W_i(t)+\sum_{j=p+1}^n(C_j-\Id)S_{t-}\,\dd N_j(t),\qquad S_0=\Id
\end{equation}
 where $$K=-\ii H-\frac{1}{2}\left(\sum_{i=1}^p L_i^*L_i+\sum_{j=p+1}^n C_j^*C_j\right).$$
 The operator $H$ is an Hermitian operator which plays the role of an Hamiltonian and the operators $L_i,i=1,\ldots,p$, $C_j,j=p+1,\ldots,n$ are any operators. The parameter $n$ is a fixed integer which corresponds to the number of noises. On $\filtreP$, the processes $(W_i(t)),i=1,\ldots,p$, appearing in the above equation, are independent brownian motions, independent of independent Poisson processes $(N_j(t)),j=p+1,\ldots,n$.

Now, we shall introduce the process which corresponds to the measurement records. For any $\rho\in\cD_k$, let $(Z_t^\rho)_t$ be the positive real-valued process defined by
\[Z_t^\rho=\tr(S_t^*S_t\rho),\]
and let $(\rho_t)_{t}$ be the $\cD_k$-valued process defined by
\begin{equation}\label{eq:qtraj}\rho_t=\frac{S_t\rho S_t^*}{\tr(S_t\rho S_t^*)}\end{equation}
if $Z_t^\rho\neq 0$, taking an arbitrarily fixed value whenever $Z_t^\rho=0$. The process $(\rho_t)$ is called a quantum trajectory with initial condition $\rho_0=\rho$

The following results on the properties of $(Z_t^\rho)_t$ were proven in \cite{barchielliholevo}.
\begin{lem}\label{lem_Zmg} For any $\rho\in\cD_k$, the stochastic process $(Z_t^\rho)_t$ is the unique solution of the SDE
\[\dd Z_t^\rho=Z_{t-}^\rho\Big(\sum_{i=1}^p\tr\big((L_i+L_i^*)\rho_{t-}\big)\dd W_i(t)+\sum_{j=p+1}^n \big(\tr(C_j^*C_j\rho_{t-})-1\big)\big(\dd N_j(t)-\dd t\big)\Big),\quad Z_0^\rho=1.\]
Moreover, $(Z_t^\rho)_{t}$ is a nonnegative martingale under $\pp$.
\end{lem}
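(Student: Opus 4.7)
The plan is to apply the Itô formula to the matrix-valued process $S_t^*S_t$ (combining the Brownian and Poisson contributions), take the trace against $\rho$, and then check that the resulting dynamics match the announced SDE; then to deduce nonnegativity and the martingale property from the representation and the absence of drift.

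First I would write down the SDE satisfied by $S_t^*$ by taking the adjoint of \eqref{eq_defS}, namely $\dd S_t^* = S_{t-}^*(K^*+\tfrac{n-p}{2}\Id)\dd t+\sum_{i=1}^p S_{t-}^*L_i^*\,\dd W_i(t)+\sum_{j=p+1}^n S_{t-}^*(C_j^*-\Id)\,\dd N_j(t)$. Then I apply the Itô formula with jumps to the product $S_t^*S_t$: the continuous parts give $S_{t-}^*(K+K^*+(n-p)\Id)S_{t-}\,\dd t+\sum_i S_{t-}^*(L_i+L_i^*)S_{t-}\,\dd W_i$ together with the Brownian bracket $\sum_i S_{t-}^*L_i^*L_iS_{t-}\,\dd t$, while at each jump time of $N_j$ the increment of $S_t^*S_t$ equals $S_{t-}^*(C_j^*C_j-\Id)S_{t-}$. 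Using $K+K^*=-\sum_i L_i^*L_i-\sum_j C_j^*C_j$ and compensating each Poisson integral as $\dd N_j=(\dd N_j-\dd t)+\dd t$, the drift contributions telescope to zero (this is exactly the role of the $\tfrac{n-p}{2}\Id$ correction), leaving the purely noise-driven representation
\begin{equation*}
\dd(S_t^*S_t)=\sum_{i=1}^p S_{t-}^*(L_i+L_i^*)S_{t-}\,\dd W_i(t)+\sum_{j=p+1}^n S_{t-}^*(C_j^*C_j-\Id)S_{t-}\,(\dd N_j(t)-\dd t).
\end{equation*}
Tracing against $\rho$, using cyclicity, and inserting the identity $S_{t-}\rho S_{t-}^*=Z_{t-}^\rho\,\rho_{t-}$ gives the announced SDE for $(Z_t^\rho)$.

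Uniqueness of the solution follows from a standard fixed-point/Gronwall argument: the coefficients of the SDE are bounded linear functions of $Z_{t-}^\rho$ (since $\rho_{t-}$ ranges in the compact set $\cD_k$, the factors $\tr((L_i+L_i^*)\rho_{t-})$ and $\tr(C_j^*C_j\rho_{t-})-1$ are uniformly bounded), which is exactly the structure under which jump-diffusion SDEs admit a unique strong solution.

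For nonnegativity I simply note that $S_t^*S_t\geq 0$ and $\rho\geq 0$ imply $Z_t^\rho=\tr(S_t^*S_t\rho)\geq 0$. For the martingale property, the SDE above has no drift, so $(Z_t^\rho)$ is a local martingale; being nonnegative it is automatically a supermartingale, and to upgrade it to a true martingale it suffices to verify $\ee[Z_t^\rho]=Z_0^\rho=1$. This can be obtained by showing that the stochastic integrals are in fact $\Lp{2}$-martingales: the linear SDE \eqref{eq_defS} propagates moments (so $\ee[\|S_t\|^2]<\infty$ on compact intervals), and combined with the boundedness of the $\rho_{t-}$-dependent factors this gives square-integrability of the integrands, hence true martingality on $[0,T]$ for every $T$. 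The only delicate point, which is really the only nontrivial step of the argument, is the algebraic cancellation of the drift; everything else is bookkeeping and standard SDE theory.
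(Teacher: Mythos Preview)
Your argument is correct. The paper does not prove this lemma itself: it simply records that the result is established in \cite{barchielliholevo} (Barchielli--Holevo). Your direct It\^o computation on $S_t^*S_t$, followed by tracing against $\rho$ and using $S_{t-}\rho S_{t-}^*=Z_{t-}^\rho\,\rho_{t-}$, is exactly the standard route taken in that reference; the drift cancellation you highlight is indeed the only nontrivial algebraic point, and you carried it out correctly. Two very small remarks that do not affect the validity: (i) the identity $S_{t-}\rho S_{t-}^*=Z_{t-}^\rho\,\rho_{t-}$ is stated for $Z_{t-}^\rho>0$, but on the null set $\{Z_{t-}^\rho=0\}$ both sides vanish (a positive operator with zero trace is zero), so the SDE still holds with the arbitrary convention for $\rho_{t-}$; (ii) for uniqueness, it is perhaps cleaner to say that the $\rho_{t-}$-dependent factors are bounded \emph{adapted} processes, so the SDE is linear in $Z$ with bounded adapted coefficients, and uniqueness is then the Dol\'eans--Dade exponential representation (or Gronwall) rather than a Lipschitz fixed-point on the pair $(Z,\rho)$.
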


This martingale property is a crucial ingredient toward the interpretation of $(\rho_t)$ as a quantum measurement process. We refer to the book \cite{BarchielliGregoratti09} and \cite{barchielliholevo} for a complete reference. For any $\rho\in\cD_k$, we define a probability $\pprho\rho_t$ on $(\Omega,\cF_t)$:
\begin{equation} \label{eq_defprho}
	\dd\pprho\rho_t=Z_t^\rho \,\dd\pp|_{\cF_t}.
\end{equation}
Since $(Z_t^\rho)_{t}$ is a $\pp$-martingale from Lemma \ref{lem_Zmg}, the family $(\pprho{\rho}_t)_t$ is consistent, that is $\pprho\rho_t(E)=\pprho\rho_s(E)$ for $t\geq s$ and $E\in\cF_s$. This then defines a unique probability on $(\Omega,\cF_\infty)$, which we denote by $\pprho\rho$. We will denote by $\eerho\rho$ the expectation with respect to $\pprho\rho$. Furthermore note that $\mathbb P^\rho(Z_t^\rho=0)=0$, this way the process $(\rho_t)$ is well defined under $\mathbb P^\rho$ and the arbitrary condition is $\mathbb P^\rho$ almost surely not necessary.

The following proposition makes explicit the relationship between $\pp$ and $\pprho\rho$. It is a classical result in indirect measurement which gives the statistics of the output process.
\begin{prop}\label{prop:girsa}\label{prop:stat} Let $\rho\in\cD_k$. 
For all $i=1,\ldots,p$ and $t\in \R_+$, let
\[\tilde W_i^{\rho}(t)=W_i(t)-\int_0^t\tr\big((L_i+L_i^*)\rho_{s-}\big)\,\dd s.\] Under $\pprho\rho$, the processes $(\tilde W_i^\rho(t)),{i=1,\ldots,p}$ are independent Wiener processes and the processes $(N_j(t)),{j=p+1,\ldots,n}$ are point processes of respective stochastic intensity $\{t\mapsto \tr(C_j^*C_j\rho_{t-})\},{j=p+1,\ldots,n}$.

 In particular under $\mathbb P^\rho$, the processes
\begin{eqnarray}
\tilde W_i^{\rho}(t)&=&W_i(t)-\int_0^t\tr\big((L_i+L_i^*)\rho_{s-}\big)\,\dd s, i=1,\ldots,p\\
\tilde N_j^{\rho}(t)&=&N_j(t)-\int_0^t\tr\big(C_j\rho_{s-}C_j^*\big)\,\dd s, j=p+1,\ldots,n
\end{eqnarray}
are $\mathbb P^\rho$ martingales.
\end{prop}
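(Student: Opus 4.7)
The plan is to apply a Girsanov-type transformation separately to the continuous and purely discontinuous martingale components, using the explicit SDE for $Z_t^\rho$ provided by Lemma \ref{lem_Zmg}. By construction $\dd\pprho{\rho}_t = Z_t^\rho \, \dd\pp|_{\cF_t}$ with $(Z_t^\rho)$ a strictly positive $\pp$-martingale (the condition $\pprho{\rho}(Z_t^\rho = 0) = 0$ already noted in the text ensures the reciprocal $1/Z^\rho_{t-}$ is well-defined $\pprho{\rho}$-a.s.), and its SDE supplies all the brackets needed for the classical change-of-measure formulas.

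For the Brownian part, the SDE of $Z^\rho$ gives $\dd\langle W_i, Z^\rho\rangle_t = Z_{t-}^\rho \, \tr((L_i + L_i^*)\rho_{t-}) \, \dd t$, so Girsanov's theorem yields that
\[
\tilde W_i^{\rho}(t) = W_i(t) - \int_0^t \frac{\dd\langle W_i, Z^\rho\rangle_s}{Z_{s-}^\rho} = W_i(t) - \int_0^t \tr\bigl((L_i+L_i^*)\rho_{s-}\bigr)\,\dd s
\]
is a continuous $\pprho{\rho}$-local martingale. Since $\tilde W_i^\rho$ differs from $W_i$ by a process of finite variation, $[\tilde W_i^\rho, \tilde W_j^\rho]_t = [W_i, W_j]_t = \delta_{ij}\,t$, so L\'evy's characterization identifies $(\tilde W_1^\rho,\ldots,\tilde W_p^\rho)$ as a $p$-dimensional standard Brownian motion under $\pprho{\rho}$.

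For the point-process part, I would invoke the Jacod--Br\'emaud--Shiryaev Girsanov theorem for marked point processes: at a jump time of $N_j$, the ratio of $\Delta Z^\rho$ to $Z_{t-}^\rho$ is precisely $\tr(C_j^*C_j\rho_{t-}) - 1 + 1 = \tr(C_j^*C_j\rho_{t-})$, and this factor multiplies the $\pp$-intensity $1$ of $N_j$ to yield the $\pprho{\rho}$-intensity $\tr(C_j^*C_j\rho_{t-})$. An alternative, more self-contained route is to verify via It\^o's product rule that $Z_t^\rho \tilde N_j^\rho(t)$ is a $\pp$-local martingale: the cross term between the jump part of $\dd Z^\rho$ at $\dd N_j$ and the compensator $-\tr(C_j\rho_{s-}C_j^*)\,\dd s$ cancels the drift introduced by the change of measure. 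The independence of the resulting $\tilde W_i^\rho$ and of the distinct $N_j$'s under $\pprho{\rho}$ is preserved because the Poisson processes share no common jumps with one another nor with the continuous noises, so the off-diagonal brackets vanish. The ``in particular'' statements are immediate once (a) and (b) are in place: $\tilde W_i^\rho$ is a martingale since it is a Brownian motion, and $\tilde N_j^\rho$ is, by definition, the compensated point process under $\pprho{\rho}$.

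The main obstacle is the simultaneous bookkeeping of the mixed diffusion/jump brackets and, crucially, maintaining the product-independence structure under the change of measure; once those are handled cleanly, the conclusions follow mechanically from L\'evy's characterization on the Brownian side and the characterization of point processes by their compensators on the jump side.
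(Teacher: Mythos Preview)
The paper does not actually prove this proposition; it is stated without proof and described as ``a classical result in indirect measurement,'' with implicit reference to Barchielli--Holevo and Barchielli--Gregoratti for the details. Your proposal outlines precisely the standard Girsanov argument that underlies those references: identify the density process $Z^\rho$ as a stochastic exponential, read off the drift correction from $\dd\langle W_i, Z^\rho\rangle_t / Z^\rho_{t-}$, apply L\'evy's characterization on the diffusive side, and use the Girsanov theorem for point processes (or the direct verification that $Z^\rho \tilde N_j^\rho$ is a $\pp$-local martingale) on the jump side. This is correct and is exactly the route one finds in the cited literature.

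One minor caution: your sentence about ``the independence of the resulting $\tilde W_i^\rho$ and of the distinct $N_j$'s under $\pprho{\rho}$'' should not be read as asserting independence \emph{between} the Brownian motions and the counting processes under $\pprho{\rho}$. The proposition only claims that the $\tilde W_i^\rho$ are mutually independent Wiener processes and that each $N_j$ has the stated stochastic intensity; since that intensity $\tr(C_j^*C_j\rho_{t-})$ is adapted to the full filtration (and hence depends on the Brownian history), the $N_j$ are not in general independent of the $\tilde W_i^\rho$ under $\pprho{\rho}$. Your bracket computations already deliver exactly what is needed, so just be careful not to overstate the conclusion.
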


Now we can express the SDE satisfied by {$(\rho_t)$} in terms of the process $\tilde W^\rho$ and  {consider under $\pprho\rho$ the evolution of a Markov open quantum system subject to indirect measurements.}
 
Using It\^o calculus, one then derives the SDE satisfied by {$(\rho_t)$}: 
\begin{equation}\label{eq:SDE1}
\begin{aligned}
\dd{\rho}_t&=\Lin(\rho_{t-})\dd t\\
&\quad+\sum_{i=1}^p\Big(L_i\rho_{t-}+\rho_{t-} L_i^*-\tr\big(\rho_{t-}(L_i+L_i^*)\big)\rho_{t-}\Big)\dd \tilde W_i^\rho(t)\\
&\quad+\sum_{j=p+1}^n \Big(\frac{C_j\rho_{t-}C_j^*}{\tr(C_j\rho_{t-}C_j^*)}-\rho_{t-}\Big)\dd \tilde N^\rho_j(t)),
\end{aligned}
\end{equation}
where the operator $\mathcal L$ is called Lindbladian operator defined as
$$\mathcal L(\rho)=-i[H,\rho]+\sum_{i=1}^p\left(L_i\rho L_i^*-\frac 12\{L_i^*L_i,\rho\}\right)+\sum_{j=p+1}^n\left(C_j\rho C_j^*-\frac 12\{C_j^*C_j,\rho\}\right),$$
for all $\rho\in\mathcal D_k.$ {Here $\{A,B\}:=AB+BA,$ for all operators $A$ and $B$.}

Remark also that for any $\rho\in \mathcal D_k$, using \eqref{eq_defprho}, we have
\begin{equation} \label{exppsrhos}
\eerho\rho(\rho_t)=\e^{t\Lin}(\rho),
\end{equation}
which is the usual quantum master equation, describing the evolution of a quantum system without measurement.

Now we can introduce the estimated process. {As announced if we do not know the initial state $\rho_0$, one introduces an estimated process starting with an arbitrary initial state $\hat\rho_0$ which evolves conditionally to the output of the measurement as if it was the true trajectory.} More precisely if $\rho_0=\rho$ is the unknown initial state, the natural underlying probability is $\mathbb P^\rho$. This way under $\mathbb P^\rho$, we consider the process 
$$\hat\rho_t=\frac{S_t\hat\rho S_t^*}{\tr(S_t\hat\rho S_t^*)},\quad\hat\rho_0=\hat\rho.$$
Before expressing the SDE satisfied by $(\hat\rho_t)$, we need to impose a necessary condition to be allowed to normalize by the quantity $\tr(S_t\hat\rho S_t^*)$. As for $(\rho_t)$, we can use an arbitrary state in the case where $\tr(S_t\hat\rho S_t^*)=0$. Though in order to compare in an efficient way $(\hat\rho_t)$ and $(\rho_t)$ under $\mathbb P^\rho$, one would expect that $\mathbb P^\rho(\tr(S_t\hat\rho S_t^*)=0)=0$. In general this quantity can vanish under $\mathbb P^\rho$. A usual sufficient condition ensuring this property is $$\ker \hat \rho\subset \ker\rho.$$
This way we have $\mathrm{supp} \rho\subset \mathrm{supp} \hat\rho$, where $\mathrm{supp}$ denotes the support of an operator, that is the orthogonal of the kernel.  This way, there exists a constant $c$ such that
$$\rho\leq c\hat \rho,$$
which implies that $\tr(S_t\rho S_t^*)\leq c \tr(S_t\hat\rho S_t^*)$, for all $t\geq0,$ $\mathbb P^\rho$ almost surely. As a consequence one can see that $$\mathbb P^\rho(\tr(S_t\hat\rho S_t^*)=0)\leq \mathbb P^\rho(\tr(S_t\rho S_t^*)=0)=0$$
and the process $(\hat \rho_t)$ is well defined without requiring an arbitrary state. The process $(\hat\rho_t)$ satisfies the same SDE {as Equation~\eqref{eq:SDE1}} with processes $(\tilde W^{\hat\rho})$ and $(\tilde N^{\hat\rho})$, that is
\begin{equation}
\begin{aligned}
\dd{\hat\rho}_t&=\Lin(\hat\rho_{t-})\dd t\\
&\quad+\sum_{i=1}^p\Big(L_i\hat\rho_{t-}+\hat\rho_{t-} L_i^*-\tr\big(\hat\rho_{t-}(L_i+L_i^*)\big)\hat\rho_{t-}\Big)\dd \tilde W_i^{\hat\rho}(t)\\\
&\quad+\sum_{j=p+1}^n \Big(\frac{C_j\hat\rho_{t-}C_j^*}{\tr(C_j\hat\rho_{t-}C_j^*)}-\rho_{t-}\Big)\dd \tilde N^{\hat\rho}_j(t).\end{aligned}
\end{equation}
In terms of the true signal $(\tilde W^\rho)$ and {$(\tilde N^\rho)$} we get
\begin{equation}
\begin{aligned}
\dd{\hat\rho}_t
&=\Lin(\hat\rho_{t-})\dd t\\
&\quad+\sum_{i=1}^p\Big(L_i\hat\rho_{t-}+\hat\rho_{t-} L_i^*-\tr\big(\hat\rho_{t-}(L_i+L_i^*)\big)\hat\rho_{t-}\Big)\times\\&\hphantom{ccccccccccccccccccccc}\times\left(\dd \tilde W_i^{\rho}(t)+\Big(\tr(\rho_{t-}(L_i+L_i^*)\big)-\tr\big(\hat\rho_{t-}(L_i+L_i^*)\Big)dt\right)\\\
&\quad+\sum_{j=p+1}^n \Big(\frac{C_j\hat\rho_{t-}C_j^*}{\tr(C_j\hat\rho_{t-}C_j^*)}-\rho_{t-}\Big)\left(\dd \tilde N^{\rho}_j(t)+\Big(\tr(C_j\rho_{t-}C_j^*)-\tr(C_j\hat\rho_{t-}C_j^*)\Big)dt\right)
\end{aligned}
\end{equation}
One can then notice that under $\mathbb P^\rho$ the process $(\hat\rho_t)$ is not autonomous and not Markovian. Actually the couple $(\hat\rho_t,\rho_t)$ is Markovian. Furthermore taking the expectation with respect to $\mathbb P^\rho$ we get
\begin{eqnarray}
\dd \mathbb E^\rho[{\hat\rho}_t]&=&\mathbb E^\rho\Bigg[\Lin(\hat\rho_{t-})+\sum_{i=1}^p\Big(L_i\hat\rho_{t}+\hat\rho_{t} L_i^*-\tr\big(\hat\rho_{t}(L_i+L_i^*)\big)\hat\rho_{t}\Big)\times\nonumber\\&&\hphantom{cccccccccccccccccccccccccccccccccc}\times\Big(\tr\big(\rho_{t}(L_i+L_i^*)\big)-\tr\big(\hat\rho_{t}(L_i+L_i^*)\big)\Big)\nonumber\\
&&+\sum_{j=p+1}^n \Big(\frac{C_j\hat\rho_{t}C_j^*}{\tr(C_j\hat\rho_{t}C_j^*)}-\hat\rho_{t}\Big)\Big(\tr(C_j\rho_{t}C_j^*)-\tr(C_j\hat\rho_{t}C_j^*)\Big)\Bigg]dt
\end{eqnarray}
which is not a master equation in Lindblad form.
\bigskip

Now let us introduce a key martingale. Such a martingale is at the cornerstone of the result of \cite{MR4205230}. Here we need to involve $\rho$ and $\hat\rho$.
\begin{prop}\label{mart}
Let define the stochastic process $(M_t^{\rho,\hat\rho})$ by
$$M_t^{\rho,\hat\rho}=\frac{\rho^{1/2}S_t^*S_t\rho^{1/2}}{\tr(S_t^*S_t\hat\rho)}$$
and by an arbitrary state in the case where $\tr(S_t^*S_t\hat\rho)=0$. The process $(M_t^{\rho,\hat\rho})$ is a bounded martingale under $\mathbb P^{\hat\rho}$ which converges in $L^1(\mathbb P^{\hat\rho})$ and $\mathbb P^{\hat\rho}$ almost surely to a random variable $(M_\infty^{\rho,\hat\rho})$.

Furthermore for all $t\geq 0$, we have
 \begin{equation}
 \frac{d\mathbb P^\rho}{d \mathbb P^{\hat\rho}}_{\vert\mathcal F_t}=\tr(M_t^{\rho,\hat\rho}). \end{equation}
Then $\mathbb P^\rho\ll\mathbb P^{\hat\rho}$ and 
 \begin{equation}
\frac{d\mathbb P^\rho}{d \mathbb P^{\hat\rho}}_{\vert\mathcal F_\infty}=\tr(M_\infty^{\rho,\hat\rho})
 \end{equation}
 As a consequence, the process $(M_t^{\rho,\hat\rho})$ converges also in $L^1(\mathbb P^\rho)$ and $\mathbb P^\rho$ almost surely.
\end{prop}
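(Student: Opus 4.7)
The plan is to reduce the proposition to two ingredients: (i) a matrix-valued martingale property of $\rho^{1/2}S_t^*S_t\rho^{1/2}$ under $\pp$, and (ii) a boundedness estimate inherited from the support condition on $\hat\rho$. First, I would observe that by the cyclicity of the trace,
$$\tr(M_t^{\rho,\hat\rho})=\frac{Z_t^\rho}{Z_t^{\hat\rho}}\qquad\text{on }\{Z_t^{\hat\rho}>0\}.$$
The hypothesis $\ker\hat\rho\subset\ker\rho$ yields a constant $c>0$ with $\rho\leq c\hat\rho$, hence $Z_t^\rho\leq c\, Z_t^{\hat\rho}$ $\pp$-a.s.\ by positivity of conjugation, and $\pprho{\hat\rho}(Z_t^{\hat\rho}=0)=\ee[Z_t^{\hat\rho}\ind_{\{Z_t^{\hat\rho}=0\}}]=0$. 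Since $\rho^{1/2}S_t^*S_t\rho^{1/2}\geq 0$ with operator norm bounded by its trace $Z_t^\rho$, the bound $\|M_t^{\rho,\hat\rho}\|\leq c$ holds $\pprho{\hat\rho}$-a.s., which provides the claimed boundedness.

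Next, I would extend Lemma~\ref{lem_Zmg} from densities to arbitrary matrices by linearity: for every $A\in\mc k$, $(\tr(S_t^*S_tA))_t$ is a $\pp$-martingale, and choosing $A$ among the matrix units shows that $S_t^*S_t$, hence $\rho^{1/2}S_t^*S_t\rho^{1/2}$, is an entrywise $\pp$-martingale. The $\pprho{\hat\rho}$-martingale property of $M_t^{\rho,\hat\rho}$ then follows by a Bayes-type identity: for $s\leq t$ and $A\in\cF_s$,
$$\eerho{\hat\rho}[\ind_A M_t^{\rho,\hat\rho}]=\ee[\ind_A Z_t^{\hat\rho}M_t^{\rho,\hat\rho}]=\ee[\ind_A \rho^{1/2}S_t^*S_t\rho^{1/2}]=\ee[\ind_A \rho^{1/2}S_s^*S_s\rho^{1/2}]=\eerho{\hat\rho}[\ind_A M_s^{\rho,\hat\rho}],$$
the second equality being valid even on $\{Z_t^{\hat\rho}=0\}$ since there $\rho^{1/2}S_t^*S_t\rho^{1/2}=0$ by the positivity argument above. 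Doob's bounded martingale convergence theorem, applied entrywise, then yields $\pprho{\hat\rho}$-a.s.\ and $\Lp1(\pprho{\hat\rho})$ convergence to a limit $M_\infty^{\rho,\hat\rho}$.

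For the Radon--Nikodym identity on $\cF_t$, a direct computation using the definition of $\pprho\rho_t$ gives, for $A\in\cF_t$,
$$\pprho\rho(A)=\ee[\ind_A Z_t^\rho]=\eerho{\hat\rho}\big[\ind_A\tr(M_t^{\rho,\hat\rho})\big].$$
To pass to $\cF_\infty$, I would define $\mathbb{Q}$ on $(\Omega,\cF_\infty)$ by $d\mathbb{Q}=\tr(M_\infty^{\rho,\hat\rho})\,d\pprho{\hat\rho}$. This is a probability: $\Lp1$-convergence gives $\eerho{\hat\rho}[\tr(M_\infty^{\rho,\hat\rho})]=\lim_t\ee[Z_t^\rho]=1$, and the closure of the uniformly integrable martingale $(\tr(M_t^{\rho,\hat\rho}))_t$ yields $\eerho{\hat\rho}[\tr(M_\infty^{\rho,\hat\rho})\mid\cF_s]=\tr(M_s^{\rho,\hat\rho})$. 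Hence $\mathbb{Q}$ and $\pprho\rho$ agree on $\bigcup_s\cF_s$, and a monotone class argument extends the identity to $\cF_\infty$, proving both $\pprho\rho\ll\pprho{\hat\rho}$ and the claimed density. Finally, $\pprho\rho$-a.s.\ convergence of $M_t^{\rho,\hat\rho}$ follows from the $\pprho{\hat\rho}$-a.s.\ convergence by absolute continuity, while $\Lp1(\pprho\rho)$-convergence follows from dominated convergence using the uniform bound $\|M_t^{\rho,\hat\rho}\|\leq c$. I expect the main subtlety to be the extension of the Radon--Nikodym identity from $\cF_t$ to $\cF_\infty$, since it is precisely there that boundedness is exploited through uniform integrability.
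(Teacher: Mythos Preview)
Your proof is correct and complete, but it follows a genuinely different route from the paper's argument. The paper establishes the $\pprho{\hat\rho}$-martingale property of $(M_t^{\rho,\hat\rho})$ by stochastic calculus: it introduces $R_t=S_t\rho^{1/2}/\sqrt{\tr(S_t^*S_t\hat\rho)}$, applies It\^o's formula to obtain an explicit SDE for $M_t^{\rho,\hat\rho}$ driven only by the innovations $\tilde W_i^{\hat\rho}$ and $\tilde N_j^{\hat\rho}$, and then invokes Proposition~\ref{prop:stat} to conclude that these are $\pprho{\hat\rho}$-martingale increments. You instead bypass It\^o calculus entirely: you first observe, by linearity from Lemma~\ref{lem_Zmg}, that $\rho^{1/2}S_t^*S_t\rho^{1/2}$ is an entrywise $\pp$-martingale, and then use the Bayes identity $\eerho{\hat\rho}[\ind_A M_t^{\rho,\hat\rho}]=\ee[\ind_A\,\rho^{1/2}S_t^*S_t\rho^{1/2}]$ to transfer the martingale property to $\pprho{\hat\rho}$, the denominator $Z_t^{\hat\rho}$ being precisely the Radon--Nikodym density that is absorbed in the change of measure. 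Your handling of the null set $\{Z_t^{\hat\rho}=0\}$ via the support condition is careful and correct. The paper's approach yields as a by-product the explicit semimartingale decomposition of $M_t^{\rho,\hat\rho}$, which may be of independent use; your approach is more elementary, relies only on the $\pp$-martingale property already established in Lemma~\ref{lem_Zmg}, and makes the extension to $\cF_\infty$ via uniform integrability and a monotone class argument more transparent than the paper's one-line justification.
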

The proof is a slight adaptation of the proof in \cite{MR4205230}. In a sake of completeness, we provide some details in the appendix.

In the next section, {in order to study the limit of the distance between $(\hat\rho_t)$ and $(\rho_t)$ we will use such martingale with respect to the chaotic state.} Let us denote $ch$ when using the chaotic state $\rho=\frac Ik$. In the sequel, we shall use the process $(M_t^{ch,ch})$ defined for all $t$ by $$M_t^{ch,ch}=\frac{S_t^*S_t}{\tr(S_t^*S_t)}.$$
Since $\mathbb P^\rho\ll \mathbb P^{ch}$ for all $\rho\in\mathcal D_k$ and since $\mathbb P^{ch}(\tr(S_t^*S_t)=0)=0$ we have {$\mathbb P^{\rho}(\tr(S_t^*S_t)=0)=0$} for all $\rho\in\mathcal D_k$. This way $M_t^{ch,ch}$ is well defined under $\mathbb P^\rho$ for all $\rho\in\mathcal D_k$. In the sequel, {we will denote $M_t^{ch,ch}=:M_t$ and we will notice that $(M_t)$ converges almost surely and in $L^1$ {under all $\mathbb P^\rho$} for all $\rho\in\mathcal D_k$ (see \cite{MR3947325,MR4205230}).}

\section{Stability of quantum filter, fidelity.}

One of the aspect of the stability of quantum filter concerns the study of a distance between the estimated trajectory $(\hat\rho_t)$ and the true trajectory $(\rho_t)$. A natural distance in this context is the quantum fidelity $F(.,.)$ defined by
$$F(\rho,\sigma)=\tr^2\left(\sqrt{\sqrt{\rho}\sigma\sqrt{\rho}}\right)$$
for all states $(\rho,\sigma)\in(\mathcal D_k)^2$. Recall that the fidelity {is defined} for all density matrices $\rho$, $\sigma,$ all norm one vectors $x,y,$ and for all unitary operators $U$ as follows
\begin{eqnarray*}
F(\rho,\sigma)&=&F(\sigma,\rho),\\
F(U\rho U^*,U\sigma U^*)&=&F(\rho,\sigma),\\
F(\rho,\vert x\rangle\langle x\vert)&=&\tr(\rho\vert x\rangle\langle x\vert)=\langle x\vert\rho\vert x\rangle,\\
F(\vert x\rangle\langle x\vert,\vert y\rangle\langle y \vert)&=&\langle x\vert y\rangle^2,\\
F(\rho,\sigma)=1&\Leftrightarrow&\rho=\sigma.
\end{eqnarray*}

In this section, we shall study the quantum fidelity between the true quantum trajectory $(\rho_t)$ and its estimation $(\hat\rho_t)$. So far, in this direction, an interesting result was the result showing that the fidelity between the estimated trajectory and the true trajectory is a submartingale \cite{MR2884021,MR3255126}. Here we make precise the fidelity in terms of $(M_t)$ which allows to study its behavior in long time. As a particular case, under the following purification assumption we show that the fidelity converges to $1$, {i.e., the stability is ensured}.
\bigskip

\noindent\textbf{Purification Assumption:} Any orthogonal projection $\pi$ such that for all $i=1,\ldots,p$ and for all $j=p+1,\ldots,n$, there exists $\lambda_k,k=1,\ldots,n$ such that
$$\pi(L_i+L_i^*)\pi=\lambda_i\pi,\quad \textrm{and}\quad \pi C_j^*C_j\pi=\lambda_j\pi$$
is of rank one.
\bigskip

 In \cite{MR4205230}, it has been shown that this condition implies the convergence of $(M_t)$ toward a rank one orthogonal projector.

 Here we have the following result.
\begin{prop}
Let $(\rho_t)$ { and $(\hat\rho_t)$ be the true quantum trajectory and its estimated one starting respectively at $\rho_0=\rho$ and $\hat\rho_0=\hat\rho.$ Suppose that $\ker\hat\rho\subset \ker\rho$.}  The fidelity between the true quantum trajectory and its estimation $(\hat\rho_t)$ satisfy for all $t\geq0$
\begin{equation}\label{eq:fidelity}F(\hat \rho_t,\rho_t)=F\left(\frac{S_t\rho S_t^*}{\tr(S_t\rho S_t)},\frac{S_t\hat\rho S_t^*}{\tr(S_t\hat\rho S_t)}\right)=F\left(\frac{\sqrt{M_t}\rho \sqrt{M_t}}{\tr(M_t\rho )},\frac{\sqrt{M_t}\hat\rho \sqrt{M_t}}{\tr(M_t\hat\rho )}\right).\end{equation}
As a consequence $(F(\hat \rho_t,\rho_t))$ converges $\mathbb P^\rho$ almost surely and in $L^1(\mathbb P^\rho)$ toward
$$F\left(\frac{\sqrt{M_\infty}\rho \sqrt{M_\infty}}{\tr(M_\infty\rho )},\frac{\sqrt{M_\infty}\hat\rho \sqrt{M_\infty}}{\tr(M_\infty\hat\rho )}\right).$$
As a consequence under the Purification assumption, we have that
$$\lim_{t\rightarrow\infty}F(\rho_t,\hat\rho_t)=1$$
in $L^1(\mathbb P^\rho)$ and $\mathbb P^\rho$ almost surely.
\end{prop}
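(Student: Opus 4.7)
\emph{Identity \eqref{eq:fidelity}.} The first equality is immediate from the definitions $\rho_t=S_t\rho S_t^*/\tr(S_t\rho S_t^*)$ and $\hat\rho_t=S_t\hat\rho S_t^*/\tr(S_t\hat\rho S_t^*)$, combined with the scaling property $F(\alpha A,\beta B)=\alpha\beta F(A,B)$ for $\alpha,\beta>0$ and positive $A,B$. For the second equality, I would use the polar decomposition $S_t=V_t|S_t|$, where $|S_t|=\sqrt{S_t^*S_t}=\sqrt{T_t}\sqrt{M_t}$ with $T_t:=\tr(S_t^*S_t)$, and $V_t$ is a partial isometry whose initial subspace equals $\ran\sqrt{M_t}$. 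This yields $S_t\rho S_t^*=T_tV_t(\sqrt{M_t}\rho\sqrt{M_t})V_t^*$ and $\tr(S_t\rho S_t^*)=T_t\tr(M_t\rho)$, with identical relations for $\hat\rho$. Since $\sqrt{M_t}\rho\sqrt{M_t}$ and $\sqrt{M_t}\hat\rho\sqrt{M_t}$ are supported on $\ran\sqrt{M_t}$, which is precisely the initial subspace of $V_t$, one checks that $\sqrt{V_tXV_t^*}=V_t\sqrt{X}V_t^*$ for both operators $X$, and then cyclicity of the trace gives the partial-isometry invariance $F(V_tXV_t^*,V_tYV_t^*)=F(X,Y)$. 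Applying this, together with the scaling law to absorb the $T_t$ factors, produces the claimed second equality.

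\emph{Convergence of the fidelity.} By Proposition~\ref{mart}, $M_t\to M_\infty$ $\mathbb{P}^\rho$-almost surely. The functional $M\mapsto F\bigl(\sqrt{M}\rho\sqrt{M}/\tr(M\rho),\sqrt{M}\hat\rho\sqrt{M}/\tr(M\hat\rho)\bigr)$ is continuous on the open set where both denominators are non-zero, so almost-sure convergence of the fidelity follows once we verify $\tr(M_\infty\rho)>0$ and $\tr(M_\infty\hat\rho)>0$ $\mathbb{P}^\rho$-a.s. Applying Proposition~\ref{mart} with $\hat\rho=I/k$ identifies $d\mathbb{P}^\rho/d\mathbb{P}^{ch}|_{\mathcal F_\infty}=k\tr(M_\infty\rho)$, whence $\tr(M_\infty\rho)>0$ $\mathbb{P}^\rho$-a.s.; the hypothesis $\ker\hat\rho\subset\ker\rho$ gives $\mathbb{P}^\rho\ll\mathbb{P}^{\hat\rho}$ through the same proposition, so the analogous $\mathbb{P}^{\hat\rho}$-a.s.\ positivity of $\tr(M_\infty\hat\rho)$ transfers to $\mathbb{P}^\rho$. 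The $L^1(\mathbb{P}^\rho)$-convergence then follows by bounded convergence, since $F(\rho_t,\hat\rho_t)\in[0,1]$.

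\emph{Purification case and expected obstacle.} Under the Purification assumption, the result of \cite{MR4205230} asserts $M_\infty=|v\rangle\langle v|$ $\mathbb{P}^\rho$-a.s.\ for some random unit vector~$v$, so $\sqrt{M_\infty}=M_\infty$ and $\sqrt{M_\infty}\rho\sqrt{M_\infty}/\tr(M_\infty\rho)=|v\rangle\langle v|$ wherever $\langle v|\rho|v\rangle>0$ (which holds a.s.\ by the argument above); the same holds with $\hat\rho$, and $F(|v\rangle\langle v|,|v\rangle\langle v|)=1$ gives the claim. The main delicate point lies entirely in the identity of the first step: the derivation $\sqrt{V_tXV_t^*}=V_t\sqrt{X}V_t^*$ hinges on $X$ being supported in the initial subspace of the partial isometry $V_t$, and the whole reduction to $\sqrt{M_t}\rho\sqrt{M_t}$ would fail otherwise. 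Once this support condition is verified, the rest of the proof reduces to standard continuity of the fidelity and to the properties of $(M_t)$ recalled in Proposition~\ref{mart}.
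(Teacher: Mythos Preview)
Your proof is correct and follows essentially the same approach as the paper. The one difference worth noting is that the paper takes $U_t$ to be \emph{unitary} in the polar decomposition $S_t = U_t\sqrt{S_t^*S_t}$, which is always possible in finite dimensions (any partial isometry on $\mathbb{C}^k$ extends to a unitary). This sidesteps what you flag as the delicate point: with a genuine unitary, the invariance $F(U_tXU_t^*,U_tYU_t^*)=F(X,Y)$ is one of the listed properties of fidelity, and no support verification is needed. Your more careful partial-isometry argument is also valid, but the shortcut is available here. Your treatment of the convergence step (positivity of the denominators via the Radon--Nikodym identification, then bounded convergence) is in fact slightly more detailed than the paper's, which simply invokes $\tr(M_\infty\rho)\le c\,\tr(M_\infty\hat\rho)$ and asserts that $\tr(M_\infty\rho)=0$ has $\mathbb{P}^\rho$-probability zero.
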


\begin{proof}
In order to prove the equality \eqref{eq:fidelity}, we use the polar decomposition of $S_t$, that is
$$S_t=U_t\sqrt{S_t^*S_t}=U_t\sqrt{\tr(S_t^*S_t)}\sqrt {M_t}$$ 
This way for all $\rho$ and for all $t\geq0$ we have
$$\rho_t=\frac{S_t\rho S_t^*}{\tr(S_t\rho S_t)}=\frac{U_t\sqrt{M_t}\rho \sqrt{M_t}U_t^*}{\tr(M_t\rho )}$$
then
\begin{eqnarray}F(\hat \rho_t,\rho_t)&=&F\left(\frac{S_t\rho S_t^*}{\tr(S_t\rho S_t)},\frac{S_t\hat\rho S_t^*}{\tr(S_t\hat\rho S_t)}\right)\\&=&F\left(\frac{U_t\sqrt{M_t}\rho \sqrt{M_t}U_t^*}{\tr(M_t\rho )},\frac{U_t\sqrt{M_t}\hat\rho \sqrt{M_t}U_t^*}{\tr(M_t\hat\rho )}\right)\\
&=&F\left(\frac{\sqrt{M_t}\rho \sqrt{M_t}}{\tr(M_t\rho )},\frac{\sqrt{M_t}\hat\rho \sqrt{M_t}}{\tr(M_t\hat\rho )}\right)\end{eqnarray}
by the unitary invariance. The convergence result is then a consequence of the convergence of $(M_t)$ under all probabilities $\mathbb P^\rho$ and the fact that the fidelity is {continuous}. The only thing that we need to take care is the possibility that $\tr(M_\infty\hat\rho )=0$ but since $\tr(M_\infty\rho )\leq c\tr(M_\infty\hat\rho )$ this would imply that $\tr(M_\infty\rho )=0$ which appears with $\mathbb P^\rho$ probability zero.

Now under purification, we know (see \cite{MR4205230}) that there exists a random variable $z$ of norm one such that 
$$M_\infty=\vert z\rangle\langle z\vert,$$
$\mathbb P^\rho$ almost surely. This way we have 
\begin{eqnarray}
F\left(\frac{\sqrt{M_\infty}\rho \sqrt{M_\infty}}{\tr(M_\infty\rho )},\frac{\sqrt{M_\infty}\hat\rho \sqrt{M_\infty}}{\tr(M_\infty\hat\rho )}\right)&=&
F\left(\frac{\sqrt{\vert z\rangle\langle z\vert}\rho \sqrt{\vert z\rangle\langle z\vert}}{\tr(\vert z\rangle\langle z\vert\rho )},\frac{\sqrt{\vert z\rangle\langle z\vert}\hat\rho \sqrt{\vert z\rangle\langle z\vert}}{\tr(\vert z\rangle\langle z\vert\hat\rho )}\right)\nonumber\\
&=&F(\vert z\rangle\langle z\vert,\vert z\rangle\langle z\vert)=1
\end{eqnarray}
and the result is proven.
\end{proof}

The almost sure convergence of the fidelity was already known since it has {been} proven in \cite{MR2884021} for a discrete time version and in \cite{MR3255126} for a continuous time version that the fidelity is a submartingale. Since it is nonnegative and bounded it almost surely converges. Here we express the limit in terms of the limit $M_\infty$ which completes the works \cite{MR2884021,MR3255126}. In the purification context, we show that the quantum filter is stable. In the general case, we have that
$$F\left(\frac{\sqrt{M_\infty}\rho \sqrt{M_\infty}}{\tr(M_\infty\rho )},\frac{\sqrt{M_\infty}\hat\rho \sqrt{M_\infty}}{\tr(M_\infty\hat\rho )}\right)=1\Leftrightarrow\frac{\sqrt{M_\infty}\rho \sqrt{M_\infty}}{\tr(M_\infty\rho )}=\frac{\sqrt{M_\infty}\hat\rho \sqrt{M_\infty}}{\tr(M_\infty\hat\rho )}$$  
In particular if $\pi_\infty$ denotes the projector onto the support of $M_\infty$ we have
$$\frac{\sqrt{M_\infty}\rho \sqrt{M_\infty}}{\tr(M_\infty\rho )}=\frac{\sqrt{M_\infty}\hat\rho \sqrt{M_\infty}}{\tr(M_\infty\hat\rho )}\Leftrightarrow \exists \lambda_\infty\,\,s.t\,\,\pi_\infty \hat\rho\pi_\infty=\lambda_\infty\pi_\infty \rho\pi_\infty$$
This way, one can see that obtaining the stability of the filter needs a strong assumption on the initialization of the filter. Indeed the estimated starting state needs to be proportional to $\rho$ onto the support on $M_\infty$ which is clearly satisfied if $M_\infty$ is {a rank one} orthogonal projector. In general, if $\textrm{rank}(M_\infty)>1$, it is not possible unless choosing $\hat\rho=\rho$ which is of course not allowed. In general a natural choice for the starting state of $(\hat\rho_t)$ is the chaotic state that is $\hat\rho_0=\frac Ik$ in order to ensure the property $\ker\hat\rho\subset \ker\rho$.

\begin{rmq}We note that  for a qubit, purification can be completely described. As in dimension two, the only orthogonal projector of rank two is the identity operator. This way, either purification holds or all $L_i$ satisfy $L_i+L_i^*=\alpha_i I$ for some constant $\alpha_i$ and all $C_j$ are multiple of unitary operators. Let $i\in\{1,\ldots,p\}$, the condition $L_i+L_i^*=\alpha_i I$ means that $L_i=A_i+\beta_iI$ for some anti-hermitian operator $A_i$ and some constant $\beta_i$.
\end{rmq}

In the case where the stability is not satisfied, one can wonder if it is the case for the Cesaro means. Such consideration follows the line of the ergodic Theorem of K\"ummerer-Maassen \cite{HMergo}. This is the question addressed in the next section. Note that in the case that the stability holds the answer is trivial.

\section{{Cesaro mean} of the estimated filter}

Concerning large time behavior of quantum trajectories, a first major result is the pathwise ergodic theorem of Maassen. This theorem essentially shows that the Cesaro  mean of quantum trajectories converges toward an invariant state of the Lindblad master equation. We denote by $$\mathcal I=\{\rho\in\mathcal D_k\,\vert \,\mathcal L(\rho)=0\}$$
that is the set of equilibrium states. Now we state the ergodic theorem for quantum trajectories.

\begin{thm}[Ergodic Theorem of K\"ummerer-Maassen \cite{HMergo}]\label{thm:ergo}
Let $(\rho_t)$ be the quantum trajectory starting with initial condition $\rho_0=\rho$. Then {there exists} a random variable $\Theta^\rho_\infty$ valued in $\mathcal I$ such that
{$$\lim_{t\rightarrow\infty}\frac{1}{t}\int_0^t\rho_sds=\Theta^\rho_\infty$$}
$\mathbb P^\rho$ almost surely.
\end{thm}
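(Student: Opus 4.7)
The plan is to combine the martingale form of the SDE \eqref{eq:SDE1} with compactness of $\mathcal{D}_k$ in order to force the accumulation points of the Cesaro mean to lie in $\mathcal{I}$, and then to exploit the Feller--Markov structure of $(\rho_t)$ to promote this to genuine almost sure convergence.

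First, reading off \eqref{eq:SDE1} directly, I would write under $\mathbb{P}^\rho$
\[
\rho_t \;=\; \rho \;+\; \int_0^t \mathcal{L}(\rho_{s-})\,\dd s \;+\; N_t ,
\]
where $(N_t)$ is the matrix-valued $\mathbb{P}^\rho$-martingale built from the $\dd\tilde W^\rho_i$ and $\dd\tilde N^\rho_j$ stochastic integrals. Because $\rho_s$ lives in the compact set $\mathcal{D}_k$, the diffusion integrands $L_i\rho_{s-}+\rho_{s-}L_i^* - \tr(\rho_{s-}(L_i+L_i^*))\rho_{s-}$ and the jump integrands are uniformly bounded in operator norm, and the stochastic intensities $\tr(C_j^*C_j\rho_{s-})$ are bounded above. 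The entries of the predictable quadratic variation $\langle N\rangle_t$ therefore grow at most linearly, and the strong law of large numbers for continuous-time martingales yields $N_t/t \to 0$ almost surely under $\mathbb{P}^\rho$. Setting $Y_t := \tfrac{1}{t}\int_0^t \rho_s\,\dd s$, linearity and continuity of $\mathcal{L}$ then give
\[
\mathcal{L}(Y_t) \;=\; \frac{\rho_t - \rho}{t} \;-\; \frac{N_t}{t} \;\longrightarrow\; 0 \qquad \mathbb{P}^\rho\text{-a.s.}
\]
Since $(Y_t)$ is valued in the compact convex set $\mathcal{D}_k$, any sample-path accumulation point $\Theta$ satisfies $\mathcal{L}(\Theta)=0$, hence $\Theta\in\mathcal{I}$. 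This already settles the theorem whenever the only accumulation point is forced to be unique, e.g.\ when $\mathcal{I}$ is a singleton.

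The main obstacle is to upgrade \emph{every accumulation point lies in $\mathcal{I}$} to \emph{$Y_t$ actually converges}. For this I would use that $(\rho_t)$ is a Feller Markov process on the compact space $\mathcal{D}_k$: Krylov--Bogolyubov supplies invariant probability measures, and a Birkhoff-type ergodic theorem for empirical measures of Feller Markov processes gives that the random measures $\tfrac{1}{t}\int_0^t \delta_{\rho_s}\,\dd s$ converge weakly $\mathbb{P}^\rho$-almost surely to a random invariant probability measure $\mu_\infty$ on $\mathcal{D}_k$. Since $Y_t$ is the barycenter of this empirical measure and the barycenter map is continuous for the weak topology on the compact space $\mathcal{D}_k$, one obtains $Y_t \to \Theta^\rho_\infty := \int_{\mathcal{D}_k} \sigma\,\dd\mu_\infty(\sigma)$ almost surely; the invariance of $\mu_\infty$ together with $\mathbb{E}^\sigma[\rho_t]=\e^{t\mathcal{L}}(\sigma)$ (identity \eqref{exppsrhos}) forces $\e^{t\mathcal{L}}(\Theta^\rho_\infty)=\Theta^\rho_\infty$, hence $\mathcal{L}(\Theta^\rho_\infty)=0$, in agreement with the previous step. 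The truly delicate ingredient is the almost sure (not merely in law) weak convergence of these random empirical measures; in the original Kümmerer--Maassen argument this probabilistic step is replaced by an operator-algebraic decomposition of $\mathcal{M}_k(\mathbb{C})$ adapted to the Lindbladian's peripheral spectrum, which yields the same conclusion without passing through empirical-measure ergodic theory.
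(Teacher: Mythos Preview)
Your first paragraph is correct and self-contained: from \eqref{eq:SDE1} the quadratic variation of the martingale part grows at most linearly, so $N_t/t\to 0$ and every accumulation point of $Y_t=\tfrac1t\int_0^t\rho_s\,\dd s$ lies in $\mathcal I$. This is a clean way to identify the limit set and is not how the paper (or K\"ummerer--Maassen) argues.

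The gap is in your second paragraph, and you yourself flag it. There is no general theorem asserting that for a Feller Markov process on a compact state space the empirical measures $\tfrac1t\int_0^t\delta_{\rho_s}\,\dd s$ converge weakly $\mathbb P^\rho$-almost surely for \emph{every} starting point. Birkhoff gives this only for a.e.\ initial condition with respect to a fixed invariant measure, and Krylov--Bogolyubov merely provides existence of invariant measures, not pathwise convergence of empirical measures. So the step ``$Y_t\to\Theta^\rho_\infty$ a.s.'' is precisely the content of the theorem and is not delivered by the tools you invoke.

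The paper does not reprove Theorem~\ref{thm:ergo} (it is cited from \cite{HMergo}), but it indicates the mechanism, and the appendix proof of Proposition~\ref{prop:ergogener} makes it explicit. One introduces the linear Ces\`aro projection
\[
P(\sigma)=\lim_{T\to\infty}\frac1T\int_0^T e^{s\mathcal L}(\sigma)\,\dd s,
\]
which exists for every $\sigma\in\mathcal D_k$ (this is the ``operator-algebraic'' input: the peripheral spectrum of $e^{t\mathcal L}$ consists of finitely many unimodular eigenvalues, so the time average converges) and lands in $\mathcal I$. The key observation is that, by the Markov property and \eqref{exppsrhos},
\[
\mathbb E^\rho\big[P(\rho_t)\,\big|\,\mathcal F_s\big]=P\big(e^{(t-s)\mathcal L}(\rho_s)\big)=P(\rho_s),
\]
so $(P(\rho_t))_t$ is a bounded $\mathcal I$-valued martingale and converges a.s.\ to some $\Theta^\rho_\infty\in\mathcal I$. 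One then shows $Y_t-P(\rho_t)\to 0$ (this is where the original K\"ummerer--Maassen argument does the remaining work, combining the martingale limit with the averaging defining $P$). The crucial difference with your attempt is that convergence is obtained directly from martingale convergence of $P(\rho_t)$, without appealing to any ergodic theorem for empirical measures. Your first step could be grafted onto this argument to identify the limit, but it does not replace the martingale $(P(\rho_t))$ as the engine of convergence.
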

The proof of this result relies on martingale property and on the fact that $\mathbb E^\rho[\rho_t]=e^{t\mathcal L}(\rho)$. In this section, we want to address the same question for the estimated trajectory $(\hat\rho_t)$ that is what is the limit under $\mathbb P^{\rho}$ of 
$$\frac{1}{t}\int_0^t\hat\rho_sds.$$
It is not straightforward to adopt the approach of the proof of the original ergodic theorem since the mean evolution of $(\hat\rho_t)$ under $\mathbb P^{\hat\rho}$ is not in Lindblad form and even more not linear. {Hence a lot of key points cannot be} directly obtained. Nevertheless, the absolute continuity $\mathbb P^\rho\ll \mathbb P^{\hat\rho}$ makes the proof of the following result very simple. Such approach with absolute continuity has already been used successfully in the context of quantum non demolition in \cite{bbbqnd} where they prove that an estimated filter has the same behavior as the true quantum trajectory.

\begin{thm}\label{thm:ergoest}
{Let $(\rho_t)$ and $(\hat\rho_t)$ be the quantum trajectory and the estimated one starting respectively with initial conditions $\rho_0=\rho$ and $\hat\rho_0=\hat\rho$ such that $\ker\hat\rho\subset\ker\rho$.} Then, there exists a random variable $\Theta^{\hat\rho}_\infty$ valued in $\mathcal I$ such that
$$\lim_{t\rightarrow\infty}\frac{1}{t}\int_0^t\hat\rho_sds=\Theta^{\hat\rho}_\infty$$
$\mathbb P^\rho$ almost surely.
\end{thm}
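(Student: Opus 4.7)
The plan is to exploit the absolute continuity $\mathbb P^\rho\ll\mathbb P^{\hat\rho}$ provided by Proposition~\ref{mart} in order to transfer the Kümmerer-Maassen ergodic theorem (Theorem~\ref{thm:ergo}) from $\mathbb P^{\hat\rho}$ to $\mathbb P^\rho$. The point is that, viewed under $\mathbb P^{\hat\rho}$, the estimated process $(\hat\rho_t)$ is nothing but a \emph{true} quantum trajectory starting at $\hat\rho$: indeed, Proposition~\ref{prop:girsa} applied with $\hat\rho$ instead of $\rho$ says that under $\mathbb P^{\hat\rho}$ the processes $(\tilde W_i^{\hat\rho})$ are independent Wiener processes and the $(N_j)$ have intensity $\tr(C_j^*C_j\hat\rho_{t-})$, so $(\hat\rho_t)$ solves exactly the usual quantum filtering SDE~\eqref{eq:SDE1} (with $\hat\rho$ in the role of $\rho$) driven by innovations of the correct statistics.

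Once this observation is made, the first step is to apply Theorem~\ref{thm:ergo} to $(\hat\rho_t)$ under $\mathbb P^{\hat\rho}$: there exists an $\mathcal I$-valued random variable $\Theta^{\hat\rho}_\infty$ such that
\[
\frac{1}{t}\int_0^t \hat\rho_s\,\dd s \;\longrightarrow\; \Theta^{\hat\rho}_\infty
\]
$\mathbb P^{\hat\rho}$-almost surely. The set
\[
A=\Big\{\omega\in\Omega \,:\, \tfrac{1}{t}\textstyle\int_0^t \hat\rho_s(\omega)\,\dd s \text{ converges in }\mathcal D_k \text{ as }t\to\infty,\text{ with limit in }\mathcal I\Big\}
\]
is $\mathcal F_\infty$-measurable and satisfies $\mathbb P^{\hat\rho}(A)=1$.

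The second step is to invoke Proposition~\ref{mart}, which asserts $\mathbb P^\rho\ll\mathbb P^{\hat\rho}$ on $\mathcal F_\infty$ as soon as $\ker\hat\rho\subset\ker\rho$ (the Radon-Nikodym derivative being $\tr(M_\infty^{\rho,\hat\rho})$). Therefore $\mathbb P^\rho(A^c)=0$ and the convergence of the Cesaro mean also holds $\mathbb P^\rho$-a.s.; the limit, which is still $\Theta^{\hat\rho}_\infty$, takes its values in $\mathcal I$. This concludes the proof.

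There is essentially no serious obstacle: the whole argument is a soft Girsanov-type transfer, the real content having been packed into Proposition~\ref{mart} and Theorem~\ref{thm:ergo}. The only point that deserves a word of care is to check that the object $\Theta^{\hat\rho}_\infty$ obtained from the ergodic theorem is $\mathcal F_\infty$-measurable (so that the almost sure statement genuinely transfers between equivalent classes of null sets on $\mathcal F_\infty$), which is immediate from its definition as an almost sure limit of $\mathcal F_t$-measurable random variables.
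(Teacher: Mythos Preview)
Your proof is correct and follows essentially the same route as the paper: apply Theorem~\ref{thm:ergo} to $(\hat\rho_t)$ under $\mathbb P^{\hat\rho}$, where it is a genuine quantum trajectory, and then transfer the almost sure convergence to $\mathbb P^\rho$ via the absolute continuity $\mathbb P^\rho\ll\mathbb P^{\hat\rho}$ from Proposition~\ref{mart}. Your additional remarks on measurability and the explicit identification of the event $A$ only make the argument more careful.
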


\begin{proof}
As mentioned the proof of this result is an application of the absolute continuity result. Indeed since $\mathbb P^\rho\ll\mathbb P^{\hat\rho}$, an almost sure result for $\mathbb P^{\hat\rho}$ will imply an almost sure result for $\mathbb P^\rho$. As a consequence, using the ergodic Theorem \ref{thm:ergo} we have that
$$\mathbb P^{\hat\rho}\left(\lim_{t\rightarrow\infty}\frac{1}{t}\int_0^t\hat\rho_sds=\Theta^{\hat\rho}_\infty\right)=1$$
then 
$$\mathbb P^{\rho}\left(\lim_{t\rightarrow\infty}\frac{1}{t}\int_0^t\hat\rho_sds=\Theta^{\hat\rho}_\infty\right)=1,$$
which is the desired result.
\end{proof}

\begin{cor}
{Let $(\rho_t)$ and $(\hat\rho_t)$ be the quantum trajectory and the estimated one starting respectively with initial conditions $\rho_0=\rho$ and $\hat\rho_0=\hat\rho$ such that $\ker\hat\rho\subset\ker\rho$.} If $\mathcal I$ is reduced to one element $\rho_\infty$, that is there exists a unique invariant state for the Lindblad master equation then 
$$\lim_{t\rightarrow\infty}\frac{1}{t}\int_0^t\rho_sds=\lim_{t\rightarrow\infty}\frac{1}{t}\int_0^t\hat\rho_sds=\rho_{\infty}$$
$\mathbb P^{\rho}$ almost surely.
\end{cor}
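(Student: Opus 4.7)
The plan is to simply combine the two ergodic statements already at our disposal, namely Theorem~\ref{thm:ergo} (K\"ummerer--Maassen) applied to $(\rho_t)$ and Theorem~\ref{thm:ergoest} applied to $(\hat\rho_t)$, and then use the assumed uniqueness of the invariant state to collapse the two random limits onto the single point $\rho_\infty$.

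More precisely, I would first invoke Theorem~\ref{thm:ergo} directly: the Cesaro mean of the true trajectory $(\rho_t)$ converges $\mathbb{P}^\rho$-almost surely to a random variable $\Theta^\rho_\infty$ taking values in $\mathcal{I}$. Under the hypothesis $\mathcal{I}=\{\rho_\infty\}$, there is no randomness left, and $\Theta^\rho_\infty=\rho_\infty$, $\mathbb{P}^\rho$-a.s. Next, since $\ker\hat\rho\subset\ker\rho$, Proposition~\ref{mart} yields $\mathbb{P}^\rho\ll\mathbb{P}^{\hat\rho}$, so Theorem~\ref{thm:ergoest} applies and provides a random variable $\Theta^{\hat\rho}_\infty\in\mathcal{I}$ such that $\frac{1}{t}\int_0^t\hat\rho_s\,ds\to\Theta^{\hat\rho}_\infty$, $\mathbb{P}^\rho$-a.s. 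Again, by uniqueness of the invariant state, $\Theta^{\hat\rho}_\infty=\rho_\infty$, $\mathbb{P}^\rho$-a.s.

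Combining these two identifications, both Cesaro means converge to the same deterministic limit $\rho_\infty$ on a $\mathbb{P}^\rho$-full measure event, which is precisely the claim.

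There is essentially no technical obstacle here, since the absolute continuity $\mathbb{P}^\rho\ll\mathbb{P}^{\hat\rho}$ that is needed to transport the ergodic limit for $(\hat\rho_t)$ from $\mathbb{P}^{\hat\rho}$ to $\mathbb{P}^\rho$ has already been absorbed into the statement of Theorem~\ref{thm:ergoest}. The only mildly subtle point worth underlining is that, because $\mathcal{I}$ is a singleton, the two a.\,s.\ limits $\Theta^\rho_\infty$ and $\Theta^{\hat\rho}_\infty$ are forced to coincide \emph{without} any further coupling between the dynamics of $(\rho_t)$ and $(\hat\rho_t)$ or any comparison of their fluctuations; uniqueness of the equilibrium alone does the job.
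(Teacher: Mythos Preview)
Your argument is correct and is exactly the intended one: the paper states this corollary immediately after Theorem~\ref{thm:ergoest} without a separate proof, since it is a direct consequence of Theorems~\ref{thm:ergo} and~\ref{thm:ergoest} together with the assumption that $\mathcal I$ is a singleton. Your write-up even makes the logic slightly more explicit than the paper does.
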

Note that Theorem \ref{thm:ergoest} tells that the Cesaro mean of the estimated trajectory converges toward a random invariant state under $\mathbb P^\rho$ as well as the true trajectory. In general, when there are more than one invariant state, there are an infinity of {states} (every convex combination). Then there {is} no reason that the invariant state obtained for the Cesaro mean of the estimated trajectory is the same as { the true one}. Indeed the random variable $\Theta^{\hat\rho}_\infty$ has not the same law under $\mathbb P^\rho$ and $\mathbb P^{\hat\rho}$. One can easily construct examples where the two {Cesaro means} have different limits (see the end of this section).

In the sequel of this section, we shall investigate a natural situation where there are more than one invariant state and where we are able to show that both Cesaro means converge to the same invariant state. 

We shall adopt the notation of \cite{BN} where the set of {invariant states} is described in terms of enclosure. We refer also to \cite{CarbonePautrat,MR3559660} where similar results are exposed.
We consider
{$$\mathcal D=\{\psi\in\mathcal H\,\,s.t\,\,\langle\psi,e^{t\mathcal L}(\rho)\psi\rangle\xrightarrow[t\rightarrow\infty]{}0\,\,\forall \rho\in\mathcal D_k\}\quad \textrm{and}\quad\mathcal R=\mathcal D^\perp.$$}
In particular $\mathcal R$ contains the support of all invariant states. We consider the situation where 
$$\mathcal D=0,\quad\mathcal R=\bigoplus_{i=1}^K\mathcal V_i$$
where each $\mathcal V_i$ is the support of a minimal invariant state and where this decomposition is unique. We denote by $\rho_\infty^i$ the invariant state with support equals to $\mathcal V_i$. The minimality is understood into the fact that there is no invariant state whose support is strictly included in $\mathcal V_i$. For each $\rho_\infty^i$, we consider the positive associated operator $M_i$ such that $\mathcal L^*(M_i)=0$, $\tr(M_k\rho^i_\infty)=\delta_{ki}$. The operators $M_i$ are orthogonal projectors and {satisfy $\sum_{i=1}^K M_i=Id$}. Note that the case $\mathcal D\neq0$ is more tricky and will be considered in full generality in \cite{BBP2020} where a law of large number, a central limit theorem as well as a large deviation principle will be derived. The article \cite{BBP2020} is not linked with estimated filters and will investigate discrete time quantum trajectories. 

Now let us introduce
$$Q^{\rho}_i(t)=\tr(M_i\rho_t),$$
for all $t\geq0$ and for all $i=1,\ldots,K$. This process can be interpreted as the probability of "seing" the quantum trajectory $(\rho_t)$ into the support of $\rho_{\infty}^i$. It is as if we measure directly an observable of the form $A=\sum_{i=1}^K\mu_iM_i$ on the state $(\rho_t)$ at time $t$. 

We shall need the following assumption in order to derive the following results.
\bigskip

\noindent\textbf{Identifiability Assumption:} For all $u\neq v$, $u,v=1,\ldots,K,$ there exists $i=1,\ldots,p$ or $j=p+1,\ldots,n$ such that
$$\tr((L_i+L_i^*)\rho_\infty^u)\neq\tr((L_i+L_i^*)\rho_\infty ^v)\quad \textrm{or}\quad\tr(C_j\rho_\infty ^uC_j^*)\neq\tr(C_j\rho_\infty^vC_j^*).$$
\bigskip

\noindent\textbf{Spectral Assumption:} The operator $\mathcal L$ has no purely imaginary eigenvalues.
\bigskip

The proof of the following result is given in the appendix.

\begin{prop}\label{mart2}
Let $\rho\in\mathcal D_k$. The processes $(Q^\rho_i(t)), i=1,\ldots,K$ are bounded martingale under $\mathbb P^\rho$. They converge in $L^1(\mathbb P^\rho)$ and $\mathbb P^\rho$ almost surely toward random variables $Q^\rho_i(\infty),i=1,\ldots,K$.

Assume that the Identifiability and the Spectral Assumption hold, then $$Q^\rho_v(\infty)Q^\rho_u(\infty)=0$$
$\mathbb P^\rho$ almost surely. Then $Q^\rho_u(\infty)\in\{0,1\}$.

 In particular there exists a random variable $\Gamma$ valued in $\{1,\ldots,K\}$ such that $Q^{\rho}_{\Gamma}(\infty)=1$, $\mathbb P^\rho$ almost surely  and $$\mathbb P^\rho[\Gamma=i]=\mathbb P^\rho(Q_i(\infty)=1)=Q^\rho_i(0).$$
\end{prop}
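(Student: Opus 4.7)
My plan is to prove the three assertions in sequence, with almost all the work concentrated in the last.

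First, the martingale property will follow from applying It\^o's formula to $Q_i^\rho(t)=\tr(M_i\rho_t)$ via \eqref{eq:SDE1}: the finite-variation drift reduces to $\tr(\mathcal L^*(M_i)\rho_{t-})\,\dd t=0$ by hypothesis, and only the stochastic integrals against $\dd\tilde W_i^\rho$ and $\dd\tilde N_j^\rho$ remain. Since $0\leq M_i\leq\Id$ and $\rho_t\in\mathcal D_k$, $(Q_i^\rho(t))$ stays in $[0,1]$, so Doob's convergence theorem delivers the $L^1(\pprho\rho)$ and $\pprho\rho$-a.s.\ convergence to $Q_i^\rho(\infty)\in[0,1]$.

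Next, I would identify the K\"ummerer--Maassen limit $\Theta_\infty^\rho$ of Theorem~\ref{thm:ergo} in terms of the $Q_j^\rho(\infty)$'s. Under the decomposition $\mathcal R=\bigoplus\mathcal V_i$ with $\mathcal D=0$, the extreme points of $\mathcal I$ are exactly the $\rho_\infty^i$, so $\Theta_\infty^\rho=\sum_i\alpha_i\rho_\infty^i$ with random nonnegative coefficients summing to one. Pairing with $M_j$ and using $\tr(M_j\rho_\infty^i)=\delta_{ij}$ gives $\alpha_j=\tr(M_j\Theta_\infty^\rho)$, which by Cesaro on the a.s.\ convergent process $Q_j^\rho(s)\to Q_j^\rho(\infty)$ equals $Q_j^\rho(\infty)$. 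Thus $\Theta_\infty^\rho=\sum_j Q_j^\rho(\infty)\rho_\infty^j$; this will be the bridge between ergodic and martingale information used in the last step.

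The hard part is $Q_u^\rho(\infty)Q_v^\rho(\infty)=0$ for $u\neq v$ under Identifiability and the Spectral Assumption. My plan is to exploit the $L^2$-bound $\eerho\rho[\langle Q_u^\rho\rangle_\infty]=\eerho\rho[Q_u^\rho(\infty)^2]-Q_u^\rho(0)^2\leq Q_u^\rho(0)(1-Q_u^\rho(0))<\infty$; reading off the quadratic variation from the SDE, this forces the Brownian integrands
\[\phi_i^u(\rho):=\tr\bigl((L_i^*M_u+M_uL_i)\rho\bigr)-\tr\bigl((L_i+L_i^*)\rho\bigr)\,\tr(M_u\rho)\]
and their Poisson analogues to be in $L^2(\dd t)$ along the trajectory $\pprho\rho$-a.s. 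The Spectral Assumption provides $e^{t\mathcal L}\to P_\infty$ with $P_\infty(\rho)=\sum_i\tr(M_i\rho)\rho_\infty^i$, which combined with the identification of Step~2 should let me push the long-time Cesaro average through these polynomial functionals and evaluate them at $\Theta_\infty^\rho=\sum_j Q_j^\rho(\infty)\rho_\infty^j$. The vanishing conditions should then collapse to linear identities of the form
\[Q_u^\rho(\infty)\Bigl(\tr\bigl((L_i+L_i^*)\rho_\infty^u\bigr)-\sum_j Q_j^\rho(\infty)\tr\bigl((L_i+L_i^*)\rho_\infty^j\bigr)\Bigr)=0,\]
together with the analogous ones from the $C_j^*C_j$'s. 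Identifiability is precisely the ingredient preventing two indices $u\neq v$ from simultaneously carrying positive mass; together with $\sum_j Q_j^\rho(\infty)=1$ (as $\sum_jM_j=\Id$), this forces $Q_j^\rho(\infty)\in\{0,1\}$ $\pprho\rho$-a.s. Defining $\Gamma$ as the a.s.\ unique index with $Q_\Gamma^\rho(\infty)=1$ then yields $\pprho\rho[\Gamma=i]=\eerho\rho[Q_i^\rho(\infty)]=Q_i^\rho(0)$ directly from the martingale identity. The main obstacle is legitimizing this Cesaro-through-polynomial-functional exchange: the Spectral Assumption is needed precisely to rule out oscillatory contributions that would not average out, and Identifiability to rule out parametric ambiguity in the resulting linear system.
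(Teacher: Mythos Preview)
Your outline is correct and tracks the paper's strategy closely: It\^o for the martingale property, bounded quadratic variation to force the Brownian and Poisson integrands to be small in an integral sense, an ergodic theorem to evaluate the resulting relations at an invariant state $\Theta_\infty=\sum_j Q_j^\rho(\infty)\rho_\infty^j$, and finally Identifiability to conclude $Q_u^\rho(\infty)Q_v^\rho(\infty)=0$. The identification of $\Theta_\infty$ in your Step~2 and the computation $\pprho\rho[\Gamma=i]=Q_i^\rho(0)$ are exactly as in the paper.

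The one place you diverge is the mechanism for Step~3, and you correctly flag it as the obstacle. You want to pass the continuous-time Cesaro mean $\tfrac1t\int_0^t(\cdot)\,ds$ through the quadratic functional $\phi_i^u(\rho_s)$ directly. The paper does not attempt this. Instead, from $\eerho\rho\big[\int_0^\infty \phi_i^u(\rho_s)^2\,ds\big]<\infty$ it extracts an increasing sequence $(t_n)$ along which $\phi_i^u(\rho_{t_n})\to 0$ $\pprho\rho$-a.s.\ (first choose $(t_n)$ so the expectations vanish, then pass to an a.s.\ subsequence). Replacing $Q_u^\rho(t_n)$ by $Q_u^\rho(\infty)$ makes the surviving expression \emph{linear} in $\rho_{t_n}$, and the paper then invokes a separate discrete-subsequence ergodic theorem (Proposition~\ref{prop:ergogener}) to average and obtain the relation at $\Theta_\infty$. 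This subsequence ergodic theorem is precisely where the Spectral Assumption is used: without it, $\tfrac1n\sum_k e^{t_k\mathcal L}$ need not converge for an uncontrolled subsequence $(t_k)$.

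Your continuous-time route is in fact viable and arguably cleaner: from $\int_0^\infty\phi_i^u(\rho_s)^2\,ds<\infty$ a.s.\ one gets $\tfrac1t\int_0^t\phi_i^u(\rho_s)\,ds\to 0$ by Cauchy--Schwarz; then replace $Q_u^\rho(s)$ by $Q_u^\rho(\infty)$ (the error has vanishing Cesaro mean since it is bounded times $o(1)$), so the remaining integrand is linear in $\rho_s$ and Theorem~\ref{thm:ergo} applies directly. This bypasses Proposition~\ref{prop:ergogener} and, as the authors themselves suspect, shows the Spectral Assumption is not actually needed. What you left vague is exactly this linearization step; once made explicit, your argument is complete.
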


\begin{thm}\label{thm:Cesarogen}Let $(\rho_t)$ be the quantum trajectory started with initial condition $\rho_0=\rho$ and $(\hat\rho_t)$ the estimated trajectory starting with $\hat\rho_0=\hat\rho$ such that $\ker\hat\rho\subset\ker\rho$. Assume that the Identifiability and the Spectral Assumptions hold, then
$$\lim_{t\rightarrow\infty}\frac{1}{t}\int_0^t\rho_sds=\lim_{t\rightarrow\infty}\frac{1}{t}\int_0^t\hat\rho_sds=\Theta^\rho_{\infty}=\sum_{i=1}^KQ_i^\rho(\infty)\rho^i_\infty$$
$\mathbb P^\rho$ almost surely.
\end{thm}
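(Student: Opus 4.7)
My strategy is to combine Theorems \ref{thm:ergo} and \ref{thm:ergoest} with the $\{0,1\}$-dichotomy for the martingales $(Q_i^\rho)$ and $(Q_i^{\hat\rho})$ established in Proposition \ref{mart2}, then identify the two resulting random invariant states by reading them off from the observation processes $W_i,N_j$, which are shared by both filters.

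First, Theorem \ref{thm:ergo} gives $\frac{1}{t}\int_0^t\rho_s\,\dd s\to \Theta^\rho_\infty\in\mathcal I$ $\mathbb P^\rho$-a.s., and Theorem \ref{thm:ergoest} gives $\frac{1}{t}\int_0^t\hat\rho_s\,\dd s\to \Theta^{\hat\rho}_\infty\in\mathcal I$ $\mathbb P^\rho$-a.s. Since $\mathcal D=\{0\}$ and the decomposition $\mathcal R=\bigoplus_{i=1}^K\mathcal V_i$ is unique, every invariant state admits the unique convex expansion $\sum_{i=1}^K\alpha_i\rho_\infty^i$, with coefficients recovered by $\alpha_i=\tr(M_i\,\cdot)$ via $\tr(M_i\rho_\infty^j)=\delta_{ij}$. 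Applying $\tr(M_i\,\cdot)$ to each Cesaro limit and invoking Cesaro's lemma for the $\mathbb P^\rho$-a.s.\ convergent processes $Q_i^\rho$ (from Proposition \ref{mart2} directly) and $Q_i^{\hat\rho}$ (from Proposition \ref{mart2} under $\mathbb P^{\hat\rho}$, transferred to $\mathbb P^\rho$ by $\mathbb P^\rho\ll\mathbb P^{\hat\rho}$), I obtain
$$\Theta^\rho_\infty=\sum_{i=1}^K Q_i^\rho(\infty)\,\rho_\infty^i,\qquad \Theta^{\hat\rho}_\infty=\sum_{i=1}^K Q_i^{\hat\rho}(\infty)\,\rho_\infty^i\qquad\mathbb P^\rho\text{-a.s.}$$
By Proposition \ref{mart2} each of these sums reduces to a single term, so there exist random indices $\Gamma,\hat\Gamma\in\{1,\ldots,K\}$, $\mathbb P^\rho$-a.s.\ well defined, with $\Theta^\rho_\infty=\rho_\infty^\Gamma$ and $\Theta^{\hat\rho}_\infty=\rho_\infty^{\hat\Gamma}$.

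The crux is to show $\Gamma=\hat\Gamma$ $\mathbb P^\rho$-a.s., which is precisely where Identifiability is used. By Proposition \ref{prop:girsa}, under $\mathbb P^\rho$ the process $\tilde W_i^\rho$ is a standard Brownian motion and $\tilde N_j^\rho$ is an $L^2$-martingale whose predictable bracket is bounded by $\|C_j^*C_j\|\,t$, so the classical SLLN yields $\tilde W_i^\rho(t)/t\to 0$ and $\tilde N_j^\rho(t)/t\to 0$ $\mathbb P^\rho$-a.s. Combined with $\frac{1}{t}\int_0^t\rho_s\,\dd s\to\rho_\infty^\Gamma$ and the linearity/continuity of the trace, this delivers $W_i(t)/t\to\tr((L_i+L_i^*)\rho_\infty^\Gamma)$ and $N_j(t)/t\to\tr(C_j\rho_\infty^\Gamma C_j^*)$ $\mathbb P^\rho$-a.s. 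Running the identical argument under $\mathbb P^{\hat\rho}$ with the innovations $\tilde W_i^{\hat\rho},\tilde N_j^{\hat\rho}$ and the Cesaro limit $\rho_\infty^{\hat\Gamma}$ of $(\hat\rho_s)$, and transferring the $\mathbb P^{\hat\rho}$-a.s.\ conclusion to $\mathbb P^\rho$ via absolute continuity, yields the same limits with $\Gamma$ replaced by $\hat\Gamma$. Uniqueness of limit then forces $\tr((L_i+L_i^*)\rho_\infty^\Gamma)=\tr((L_i+L_i^*)\rho_\infty^{\hat\Gamma})$ and $\tr(C_j\rho_\infty^\Gamma C_j^*)=\tr(C_j\rho_\infty^{\hat\Gamma}C_j^*)$ for every $i,j$, $\mathbb P^\rho$-a.s., and the Identifiability Assumption excludes any $\Gamma\neq\hat\Gamma$. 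Hence $\Theta^\rho_\infty=\Theta^{\hat\rho}_\infty$ $\mathbb P^\rho$-a.s., which concludes the proof.

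I expect the main technical obstacle to be this last step: one has to justify the martingale SLLN for the compensated Poisson process $\tilde N_j^\rho$ and carefully track that the $\mathbb P^{\hat\rho}$-a.s.\ convergence of $W_i(t)/t$ and $N_j(t)/t$ toward a $\hat\Gamma$-dependent limit really survives the transfer to $\mathbb P^\rho$. The remaining steps are essentially bookkeeping on top of Theorems \ref{thm:ergo}, \ref{thm:ergoest} and Proposition \ref{mart2}.
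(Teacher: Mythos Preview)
Your argument is correct, but the paper reaches the conclusion by a shorter and more elementary route. After writing $\Theta^\rho_\infty=\sum_i Q_i^\rho(\infty)\rho_\infty^i$ and $\Theta^{\hat\rho}_\infty=\sum_i Q_i^{\hat\rho}(\infty)\rho_\infty^i$ and invoking the $\{0,1\}$-dichotomy of Proposition~\ref{mart2} (exactly as you do), the paper does not pass through the observation asymptotics $W_i(t)/t$, $N_j(t)/t$. Instead it exploits the kernel inclusion $\ker\hat\rho\subset\ker\rho$ a second time: from $\rho\le c\hat\rho$ one gets the pointwise domination $Q_u^\rho(t)=\tr(M_u\rho_t)\le c\,\tr(M_u\hat\rho_t)=c\,Q_u^{\hat\rho}(t)$ for every $t$ and $u$, hence $Q_u^\rho(\infty)\le c\,Q_u^{\hat\rho}(\infty)$ $\mathbb P^\rho$-a.s. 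Combined with the fact that both families take values in $\{0,1\}$ and sum to $1$, this forces $Q_u^\rho(\infty)=1\Leftrightarrow Q_u^{\hat\rho}(\infty)=1$, i.e.\ $\Gamma=\hat\Gamma$, with no further appeal to Identifiability.

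Your route is more ``observational'': you recover the selected index from the long-run statistics of the measurement record and then invoke Identifiability once more to separate the $\rho_\infty^u$'s. This is conceptually appealing from a filtering viewpoint and makes transparent \emph{why} Identifiability is the right hypothesis, but it costs an extra martingale SLLN for the compensated Poisson terms and a second use of Identifiability. The paper's domination trick is quicker, uses the hypothesis $\ker\hat\rho\subset\ker\rho$ more fully, and relies on Identifiability only indirectly through Proposition~\ref{mart2}.
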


\begin{proof}
{The proof is again based on application of the absolute continuity.} Since $\ker\hat\rho\subset\ker\rho$, note that
$$Q_u^\rho(t)\leq cQ_u^{\hat\rho}(t)$$
for all $t\geq 0$ and for all $u=1,\ldots,K$. This way we have $Q_u^\rho(\infty)\leq cQ_u^{\hat\rho}(\infty)$. Under $\mathbb P^{\hat\rho}$ or $\mathbb P^\rho$, only one $Q_u^\rho(\infty)$ is non zero and only one $Q_v^{\hat\rho}(\infty)$ is non zero. Necessarily $u=v$. Indeed let $u$ be such that $Q_u^\rho(\infty)=1$ then $Q_u^{\hat\rho}(\infty)>0$ and then $Q_u^{\hat\rho}(\infty)=1$. Reciprocally let $u$ be such that $Q_u^{\hat\rho}(\infty)=1$ then for all $v\neq u$, we have $Q_v^{\hat\rho}(\infty)=0$ which implies $Q_v^{\rho}(\infty)=0$ and then $Q_u^{\rho}(\infty)=1$.  We finally have the following equivalence
$$Q_v^\rho(\infty)=0\Leftrightarrow Q_v^{\hat\rho}(\infty)=0\quad \mathrm{and}\quad Q_u^\rho(\infty)=1\Leftrightarrow Q_u^{\hat\rho}(\infty)=1.$$
Now we know that under $\mathbb P^\rho$ (see the proof in the appendix), we have
$$\Theta^\rho_\infty=\sum_{i=1}^K Q^\rho_i(\infty)\rho_\infty^i\quad
\mathrm{and} 
\quad\Theta^{\hat\rho}_\infty=\sum_{i=1}^K Q^{\hat\rho}_i(\infty)\rho_\infty^i.$$
The above equivalence yields the result.
\end{proof}

Theorem \ref{thm:Cesarogen} clearly states that under the Identifiability Assumption and Spectral Assumption the Cesaro mean of the estimated filter converges toward the same limit of the one for the true quantum trajectory. The identifiability assumption is clearly the good setup for such a result. Indeed consider simply block type operators of the form
$$L_i=\left(\begin{array}{cc}\tilde L_i&0\\0&\tilde L_i\end{array}\right)\quad C_j=\left(\begin{array}{cc}\tilde C_j&0\\0&\tilde C_j\end{array}\right)$$
and suppose that
$$\rho_0=\left(\begin{array}{cc} 0&0\\0&\tilde \rho\end{array}\right)$$
Suppose for example that the operators $\tilde L_i$ and $\tilde C_j$ generate a unique invariant state $\tilde\rho_\infty$ for the corresponding Lindblad operator. It is then clear that {if the estimated filter is initialized at $\hat\rho_0=I/k$, then} the Cesaro mean of the estimated filter will converge to
$$\left(\begin{array}{cc} \frac 12\tilde\rho_\infty&0\\0&\frac 12\tilde\rho_\infty\end{array}\right)$$
whereas {the Cesaro mean} of the true quantum trajectory converges toward 
$$\left(\begin{array}{cc} 0&0\\0&\tilde\rho_\infty\end{array}\right).$$
Of course the Identifiability Assumption is obviously not satisfied.

Concerning the Spectral Assumption, we suspect that it is too strong, however it is needed in the proof of a version of the Ergodic Theorem of Kummerer Maasen for a subsequence (see the appendix). In particular, since we do not have the control on this subsequence, the spectral condition allows us to conclude.
\bigskip

\noindent\textbf{Acknowledgments.} C. P. is supported by the ANR project ``Quantum Trajectories" ANR-20-CE40-0024-01, 2021-2025. C.P. is supported by the ANR project ``\'Evolutions Stochastiques Quantiques" ESQuisses Projet-ANR-20-CE47-0014. N. A. and C. P. are supported by the ANR project ``Estimation et contr\^ole des systèmes quantiques ouverts" Q-COAST Projet-ANR-19-CE48-0003. N. A. is supported by the ANR project QUACO ANR-17-CE40-0007.

\bibliographystyle{halpha}
\bibliography{biblio}
\section{Appendix}
\label{sec:app}

\begin{proof}[Proof of Proposition \ref{mart}]
Introduce the process $R_t=\frac{S_t\rho^{1/2}}{\sqrt{\tr(S_t^*S_t\hat \rho)}}$ for all $t\geq0$, then applying It\^o calculus yields
\begin{align*}
\dd M_t^{\rho,\hat\rho}=&\sum_{i=1}^p \Big(R_{t-}^*(L_i+L_i^*)R_{t-}-M_{t-}^{\rho,\hat\rho}\tr\big(R_{t-}^*(L_i+L_i^*)R_{t-}\big)\Big)\,\dd \tilde W^{\hat\rho}_i(t)\\
&+\sum_{j=p+1}^n \Big(\frac{R_{t-}^*C_j^*C_j R_{t-}}{\tr(R_{t-}^*C_j^*C_j R_{t-})}- M_{t-}^{\rho,\hat\rho}\Big)\dd \tilde N^{\hat\rho}_j(t).
\end{align*}
Then Proposition \ref{prop:stat} yields the result concerning the martingale property under $\mathbb P^{\hat\rho}$. Now note that $(M_t^{\rho,\hat\rho})$ is $\mathbb P^{\hat\rho}$ almost surely a nonnegative operator. Since $\rho\leq c\hat\rho$, we have $$\tr(M_t^{\rho,\hat\rho})\leq c,$$
for all $t\geq0$, $\mathbb P^{\hat\rho}$-almost surely. Then under $\mathbb P^{\hat\rho}$, $(M_t^{\rho,\hat\rho})$ is a nonnegative bounded martingale which converges in $L^1(\mathbb P^{\hat\rho})$ and $\mathbb P^{\hat\rho}$ almost surely to a random variable $M_\infty^{\rho,\hat\rho}$.

Now the following equality
\begin{eqnarray}
\tr(S_t\rho S_t^*)&=&\frac {\tr(S_t\rho S_t^*)}{\tr(S_t\hat\rho S_t^*)}\tr(S_t\hat\rho S_t^*)
\end{eqnarray}
for all $t\geq0$ implies that $$\frac{d\mathbb P^{\rho}}{d\mathbb P^{\hat\rho}}_{\vert\mathcal F_t}=\tr(M_t^{\rho,\hat\rho})$$
and the $L^1(\mathbb P^{\hat\rho})$ convergence yields the absolute continuity result that is
$$\frac{d\mathbb P^{\rho}}{d\mathbb P^{\hat\rho}}=\tr(M_\infty^{\rho,\hat\rho}).$$
\end{proof}

In the proof of Proposition \eqref{mart2}, we shall need the intermediate ergodic result which actually is not a direct consequence of the Ergodic Theorem of Kummerer Maassen.

\begin{prop}\label{prop:ergogener}
Let $(\rho_t)$ be a quantum trajectory starting with initial condition $\rho_0=\rho$. Assume that $\mathcal L$ satisfies the Spectral Assumption. For all increasing subsequence $(t_n)$ converging toward $\infty$, there exists a random variable $\Theta_\infty$ (which depends possibly of the subsequence) such that
\begin{equation}
\lim_{n\rightarrow}\frac 1n\sum_{k=1}^n\rho_{t_k}=\Theta_\infty
\end{equation}
$\mathbb P^\rho$ almost surely.
\end{prop}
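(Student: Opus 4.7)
The plan is to decompose $\rho_t$ via the spectral projector of $\mathcal L$ at $0$. Under the Spectral Assumption, $0$ is an isolated eigenvalue of $\mathcal L$; let $\Pi\in\mathrm{End}(\mathcal M_k(\mathbb C))$ denote the associated Riesz projector onto $\ker\mathcal L$ and set $R:=\ker\Pi$. Then $\Pi$ commutes with $\mathcal L$ (hence with $e^{t\mathcal L}$), $\mathcal L R\subseteq R$, and there exist constants $C,\alpha>0$ with $\|e^{t\mathcal L}|_R\|\le Ce^{-\alpha t}$ for all $t\ge 0$.

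First, using \eqref{exppsrhos} together with the Markov property of $(\rho_t)$ under $\mathbb P^\rho$, one has $\mathbb E^\rho[\rho_t\mid\mathcal F_s]=e^{(t-s)\mathcal L}\rho_s$ for $s\le t$. Applying $\Pi$ and using $\Pi e^{(t-s)\mathcal L}=\Pi$ yields $\mathbb E^\rho[\Pi\rho_t\mid\mathcal F_s]=\Pi\rho_s$, so that $(\Pi\rho_t)_t$ is a bounded $\ker\mathcal L$-valued martingale. Doob's theorem yields a $\mathbb P^\rho$-a.s. limit $\Pi_\infty\in\ker\mathcal L$; for any increasing $(t_n)\uparrow\infty$, $\Pi\rho_{t_n}\to\Pi_\infty$ almost surely, whence by Ces\`aro $\tfrac 1n\sum_{k=1}^n\Pi\rho_{t_k}\to\Pi_\infty$ almost surely.

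Second, it remains to show $\tfrac 1n\sum_{k=1}^n Y_{t_k}\to 0$ $\mathbb P^\rho$-a.s., where $Y_t:=(1-\Pi)\rho_t\in R$. Iterating the Markov decomposition $Y_{t_k}=e^{(t_k-t_{k-1})\mathcal L}Y_{t_{k-1}}+\Delta_k$ (with $t_0:=0$ and $\Delta_k:=Y_{t_k}-\mathbb E^\rho[Y_{t_k}\mid\mathcal F_{t_{k-1}}]$) produces orthogonal $R$-valued martingale increments; exchanging the order of summation yields
\[
\frac 1n\sum_{k=1}^n Y_{t_k}=\frac 1n\sum_{k=1}^n e^{t_k\mathcal L}Y_0+\frac 1n\sum_{j=1}^n G_{j,n}\Delta_j,\qquad G_{j,n}:=\sum_{k=j}^n e^{(t_k-t_j)\mathcal L}\big|_R.
\]
The first term vanishes deterministically by the spectral decay and Ces\`aro. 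For the second, orthogonality gives $\mathbb E^\rho[\|\cdot\|^2]\le \tfrac 1{n^2}\sum_j\|G_{j,n}\|^2\,\mathbb E^\rho[\|\Delta_j\|^2]$, and an It\^o estimate on \eqref{eq:SDE1} over $[t_{j-1},t_j]$ produces $\mathbb E^\rho[\|\Delta_j\|^2]\le C\min(t_j-t_{j-1},1)$. Combined with $\|G_{j,n}\|\le C\sum_{k=j}^n e^{-\alpha(t_k-t_j)}$ and a summation-by-parts argument, this delivers $L^2(\mathbb P^\rho)$-convergence of the second sum to $0$. A Borel--Cantelli step along a sufficiently sparse index subsequence, interpolated via the uniform boundedness of the Ces\`aro means, upgrades this to almost sure convergence, giving $\Theta_\infty=\Pi_\infty$.

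The main obstacle is the quantitative bound in the second step: when the sampling times $(t_k)$ cluster, $\|G_{j,n}\|$ can be large, and the $L^2$ estimate is informative only if the smallness of $\mathbb E^\rho[\|\Delta_j\|^2]$ (itself controlled by the gap $t_j-t_{j-1}$) is exploited through a careful summation by parts. This competition between the exponential decay of $e^{t\mathcal L}|_R$ and the irregular geometry of the subsequence is precisely where the Spectral Assumption is genuinely used.
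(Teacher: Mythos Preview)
Your decomposition via the Riesz projector $\Pi$ onto $\ker\mathcal L$ is exactly the right object, and under the Spectral Assumption it coincides with the paper's map $P(\rho)=\lim_n\tfrac1n\sum_k e^{t_k\mathcal L}(\rho)$: since $e^{t\mathcal L}\rho\to\Pi\rho$ for every $\rho$, one has $P=\Pi$. The paper's proof consists precisely in checking that $(P(\rho_{t_k}))_k$ is a bounded martingale and then invoking the original K\"ummerer--Maassen argument for the remainder. So your Step~1 is correct and matches the paper.

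The difficulty is Step~2. Your $L^2$ bound hinges on $\tfrac1{n^2}\sum_j\|G_{j,n}\|^2\,\mathbb E^\rho[\|\Delta_j\|^2]\to0$, but this quantity need not vanish when the sampling times cluster. For $t_k=\log k$ one has $t_j-t_{j-1}\sim 1/j$ while $\|G_{j,n}\|\le C\sum_{k\ge j}(j/k)^\alpha\sim Cj/(\alpha-1)$ (for $\alpha>1$), so the upper bound becomes
\[
\frac{C'}{n^2}\sum_{j=1}^n j^2\cdot\frac1j\;\sim\;\frac{C'}{2},
\]
which is bounded away from zero. The competition you correctly identify between the spectral decay of $e^{t\mathcal L}|_R$ and the irregular geometry of $(t_k)$ is therefore not resolved by the unspecified ``summation-by-parts'' step, and without $L^2$ convergence the Borel--Cantelli upgrade has nothing to start from. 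The paper does not attempt an explicit $L^2$ control of the residual; it relies instead on the structure of the K\"ummerer--Maassen argument itself, which treats the non-invariant part by a different mechanism than a crude bound on $\sum G_{j,n}\Delta_j$. If you want a self-contained proof along your lines, this second step needs a genuinely different estimate---either a finer analysis of $G_{j,n}\Delta_j$ that exploits cancellation beyond operator-norm bounds, or an alternative decomposition of $\tfrac1n\sum_k Y_{t_k}$ that does not accumulate the $G_{j,n}$ factors.
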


\begin{proof}
This result is very close to the ergodic theorem of Kummerer Maassen and the proof is almost exactly the same. Actually the only point to check is that the process $(P(\rho_{t_k}))$ is a bounded convergent martingale toward $\Theta_\infty$, where
\begin{equation}\label{eq:invergo}P(\rho)=\lim_{n\rightarrow\infty}\frac 1n\sum_{k=1}^ne^{t_k\mathcal L}(\rho).
\end{equation}
Since $\mathcal L$ has no purely imaginary part eigenvalues $(e^{t_k\mathcal L}(\rho))$ is a convergent sequence toward an invariant state. Then $P(\rho)$ is an invariant state. Now we easily have
\begin{eqnarray}
\mathbb E^\rho[P(\rho_{t_j})\vert\mathcal F_{j-1}]&=& \lim_{n\rightarrow\infty}\frac 1n\sum_{k=1}^ne^{t_k\mathcal L}(\mathbb E^\rho[(\rho_{t_j})\vert\mathcal F_{j-1}])\\
&=&\lim_{n\rightarrow\infty}\frac 1n\sum_{k=1}^ne^{t_k\mathcal L}(e^{(t_j-t_{j-1})\mathcal L}(\rho_{t_{j-1}}))\\
&=&e^{(t_j-t_{j-1})\mathcal L}(P(\rho_{t_{j-1}}))=P(\rho_{t_{j-1}})
\end{eqnarray}
Now the interesting reader can follow the line of \cite{HMergo} to complete the proof. We just want to point out that since we do not have the hand of the subsequence, we need to impose the condition $\mathcal L$ has no purely imaginary part in order that $P(\rho)$ is an invariant state. In the case where $\mathcal L$ has purely imaginary eigenvalues the convergence, of the Cesaro mean \eqref{eq:invergo} toward an invariant state is not ensured (consider unitary operator for example).
\end{proof}

\begin{proof}[Proof of Proposition \eqref{mart2}]
For $k=1,\ldots,K$, the fact that $(Q_k^\rho(t))$ is a martingale under $\mathbb P^{\rho}$ is straightforward. Using the fact that $\mathcal L^*(M_k)=0$, we indeed get
\begin{eqnarray}
dQ_k^\rho(t)&=&\sum_{i=1}^p\Big(\tr(M_k(L_i\rho_{t-}+\rho_{t-} L_i^*)-\tr\big(\rho_{t-}(L_i+L_i^*)\big)\tr(M_k\rho_{t-}))\Big)\dd \tilde W_i^\rho(t)\nonumber\\
&&+\sum_{j=p+1}^n \left(\frac{\tr(M_kC_j\rho_{t-}C_j^*)}{\tr(C_j\rho_{t-}C_j^*)}-\tr(M_k\rho_{t-})\right)\dd \tilde N^\rho_j(t)),
\end{eqnarray}
which is clearly a martingale under $\mathbb P^\rho$. Now, compute the second moment, for all $t\geq0$ we have
\begin{eqnarray}
\mathbb E^\rho[(Q_k^\rho(t))^2]&=&\mathbb E^\rho\Bigg[ \int_0^t\sum_{i=1}^p\Big(\tr(M_k(L_i\rho_{s}+\rho_{s} L_i^*)-\tr\big(\rho_{s}(L_i+L_i^*)\big)\tr(M_k\rho_{s}))\Big)^2ds\nonumber\\
&&+\int_0^t\sum_{j=p+1}^n\left(\frac{\tr(M_kC_j\rho_{s}C_j^*)}{\tr(C_j\rho_{s}C_j^*)}-\tr(M_k\rho_{s})\right)^2\tr(C_j\rho_{s}C_j^*)ds\Bigg].
\end{eqnarray} 
The fact that $(Q_k^\rho(t))$ is bounded and $\mathbb P^\rho$ almost surely implies convergence of all the moment. This quantity converges namely when $n$ goes to infinity toward
\begin{eqnarray}
&& \int_0^\infty\mathbb E^\rho\Bigg[\sum_{i=1}^p\Big(\tr(M_k(L_i\rho_{s}+\rho_{s} L_i^*)-\tr\big(\rho_{s}(L_i+L_i^*)\big)\tr(M_k\rho_{s}))\Big)^2\Bigg]ds\nonumber\\
&&+\int_0^\infty\mathbb E^\rho\Bigg[\sum_{j=p+1}^n\left(\frac{\tr(M_kC_j\rho_{s}C_j^*)}{\tr(C_j\rho_{s}C_j^*)}-\tr(M_k\rho_{s})\right)^2\tr(C_j\rho_{s}C_j^*)\Bigg]ds.
\end{eqnarray}
Since the integrand are nonnegative, their liminf at infinity is equal to zero. Then there exists a increasing subsequence $(t_n)$ converging toward infinity such that
\begin{eqnarray}
&& \lim_{n\rightarrow\infty}\mathbb E^\rho\Bigg[\sum_{i=1}^p\Big(\tr(M_k(L_i\rho_{t_n}+\rho_{t_n} L_i^*)-\tr\big(\rho_{t_n}(L_i+L_i^*)\big)\tr(M_k\rho_{t_n}))\Big)^2\nonumber\\
&&\hphantom{cccccccc}+\sum_{j=p+1}^n\left(\frac{\tr(M_kC_j\rho_{t_n}C_j^*)}{\tr(C_j\rho_{t_n}C_j^*)}-\tr(M_k\rho_{t_n})\right)^2\tr(C_j\rho_{t_n}C_j^*)\Bigg]=0.
\end{eqnarray}
Then, the sequence
\begin{eqnarray}&&\sum_{i=1}^p\Big(\tr(M_k(L_i\rho_{t_n}+\rho_{t_n} L_i^*)-\tr\big(\rho_{t_n}(L_i+L_i^*)\big)\tr(M_k\rho_{t_n}))\Big)^2
\\&&+\sum_{j=p+1}^n\left(\frac{\tr(M_kC_j\rho_{t_n}C_j^*)}{\tr(C_j\rho_{t_n}C_j^*)}-\tr(M_k\rho_{t_n})\right)^2\tr(C_j\rho_{t_n}C_j^*)
\end{eqnarray}
 converges in $L^1(\mathbb P^\rho)$ and then up to a subsequence we have a convergence $\mathbb P^\rho$ almost surely. Up to extracting again, there exists an increasing subsequence still denoted by $(t_n)$ converging toward infinity such that
\begin{eqnarray}
&& \lim_{n\rightarrow\infty}\sum_{i=1}^p\Big(\tr(M_k(L_i\rho_{t_n}+\rho_{t_n} L_i^*)-\tr\big(\rho_{t_n}(L_i+L_i^*)\big)\tr(M_k\rho_{t_n}))\Big)^2\nonumber\\
&&\hphantom{cccccccc}+\sum_{j=p+1}^n\Big(\frac{\tr(M_kC_j\rho_{t_n}C_j^*)}{\tr(C_j\rho_{t_n}C_j^*)}-\tr(M_k\rho_{t_n})\Big)^2\tr(C_j\rho_{t_n}C_j^*)=0.
\end{eqnarray}
Invoking $Q^\rho_k(\infty)$, we get
\begin{eqnarray}
&& \lim_{n\rightarrow\infty}\sum_{i=1}^p\Big(\tr(M_k(L_i\rho_{t_n}+\rho_{t_n} L_i^*)-\tr\big(\rho_{t_n}(L_i+L_i^*)\big)Q^\rho_k(\infty)\Big)^2\nonumber\\
&&\hphantom{cccccccc}+\sum_{j=p+1}^n\left(\frac{\tr(M_kC_j\rho_{t_n}C_j^*)}{\tr(C_j\rho_{t_n}C_j^*)}-Q^\rho_k(\infty)\right)^2\tr(C_j\rho_{t_n}C_j^*)=0,
\end{eqnarray}
$\mathbb P^\rho$ almost surely. This way, for all $i=1,\ldots,p$ and $j=p+1,\ldots,n$
\begin{eqnarray}
&& \lim_{n\rightarrow\infty}\Big(\tr(M_k(L_i\rho_{t_n}+\rho_{t_n} L_i^*)-\tr\big(\rho_{t_n}(L_i+L_i^*)\big)Q^\rho_k(\infty)\Big)^2=0\nonumber\\
&&\lim_{n\rightarrow\infty}\left(\frac{\tr(M_kC_j\rho_{t_n}C_j^*)}{\tr(C_j\rho_{t_n}C_j^*)}-Q^\rho_k(\infty)\right)^2\tr(C_j\rho_{t_n}C_j^*)=0,
\end{eqnarray}
which yields by multiplying by $\tr(C_j\rho_{t_n}C_j^*)$ which is bounded
\begin{eqnarray}
&& \lim_{n\rightarrow\infty}\Big(\tr(M_k(L_i\rho_{t_n}+\rho_{t_n} L_i^*)-\tr\big(\rho_{t_n}(L_i+L_i^*)\big)Q^\rho_k(\infty)\Big)^2=0\nonumber\\
&&\lim_{n\rightarrow\infty}\Big(\frac{\tr(M_kC_j\rho_{t_n}C_j^*)}{\tr(C_j\rho_{t_n}C_j^*)}-Q^\rho_k(\infty)\Big)^2\tr(C_j\rho_{t_n}C_j^*)^2\\&&\hphantom{ccccc}=\lim_{n\rightarrow\infty}\Big(\tr(M_kC_j\rho_{t_n}C_j^*)-\tr(C_j\rho_{t_n}C_j^*)Q^\rho_k(\infty)\Big)^2=0.
\end{eqnarray}
Finally, we get
\begin{eqnarray}
&& \lim_{n\rightarrow\infty}\Big(\tr(M_k(L_i\rho_{t_n}+\rho_{t_n} L_i^*)-\tr\big(\rho_{t_n}(L_i+L_i^*)\big)Q^\rho_k(\infty)\Big)=0\nonumber\\
&&\lim_{n\rightarrow\infty}\Big(\tr(M_kC_j\rho_{t_n}C_j^*)-\tr(C_j\rho_{t_n}C_j^*)Q_k(\infty\Big)=0.
\end{eqnarray}
Using the version of the ergodic result, Proposition \ref{prop:ergogener} we get
\begin{eqnarray}
&& \lim_{n\rightarrow\infty}\frac{1}{n}\sum_{l=1}^n\Big(\tr(M_k(L_i\rho_{t_l}+\rho_{t_l} L_i^*)-\tr\big(\rho_{t_l}(L_i+L_i^*)\big)Q^\rho_k(\infty)\Big)\\&=&\Big(\tr(M_k(L_i\Theta_{\infty}+\Theta_{\infty} L_i^*)-\tr\big(\Theta_{\infty}(L_i+L_i^*)\big)Q^\rho_k(\infty)\Big)=0\nonumber\\
&&\lim_{n\rightarrow\infty}\frac{1}{n}\sum_{l=1}^n\Big(\tr(M_kC_j\rho_{t_l}C_j^*)-\tr(C_j\rho_{t_l}C_j^*)Q^\rho_k(\infty)\Big)\nonumber\\
&=&\Big(\tr(M_kC_j\Theta_{\infty}C_j^*)-\tr(C_j\Theta_{\infty}C_j^*)Q^\rho_k(\infty)\Big)=0
\end{eqnarray}
Now let us note that
$$\tr(M_k\Theta_\infty)=\lim\frac{1}{n}\sum_{j=1}^n\tr(M_u\rho_{t_j})=\lim\frac{1}{n}\sum_{j=1}^nQ_u(t_j)=Q_k(\infty)$$
and we then have
{$$\Theta_\infty=\sum_{u=1}^K Q_u(\infty)\rho_\infty^u.$$}
Here, we can note that the limit does not depend of the subsequence. Let us exploit that for all $i=1,\ldots,p$

\begin{eqnarray*}
\Big(\tr(M_k(L_i\Theta_{\infty}+\Theta_{\infty} L_i^*)&-&\tr\big(\Theta_{\infty}(L_i+L_i^*)\big)Q^\rho_k(\infty)\Big)=0\\
\Rightarrow\quad\tr(M_k(L_i\Theta_{\infty}+\Theta_{\infty} L_i^*)&=&\tr\big(\Theta_{\infty}(L_i+L_i^*)\big)Q^\rho_k(\infty).
\end{eqnarray*}
Multiplying by $Q_v^\rho(\infty)$, we get
\begin{eqnarray}Q^\rho_v(\infty)\tr(M_k(L_i\Theta_{\infty}+\Theta_{\infty} L_i^*))&=&\tr\big(\Theta_{\infty}(L_i+L_i^*)\big)Q^\rho_k(\infty)Q_v(\infty)\\
&=&Q^\rho_k(\infty)\tr(M_v(L_i\Theta_{\infty}+\Theta_{\infty} L_i^*)\end{eqnarray}
Furthermore
\begin{eqnarray}Q^\rho_v(\infty)\tr(M_k(L_i\Theta_{\infty}+\Theta_{\infty} L_i^*)&=&Q^\rho_v(\infty)Q^\rho_k(\infty)\tr(M_k(L_i\rho_k+\rho_k L_i^*)\\
&=&Q^\rho_v(\infty)Q^\rho_k(\infty)\tr(M_v(L_i\rho_v+\rho_v L_i^*)
\end{eqnarray}
In the same way we get for all $j=p+1,\ldots,n$
$$Q^\rho_v(\infty)Q^\rho_k(\infty)\tr(M_v(C_j\rho_vC_j^*)=Q_v(\infty)Q^\rho_k(\infty)\tr(M_k(C_j\rho_kC_j^*).$$
This implies that for all $v,k=1,\ldots,K$
$$Q^\rho_v(\infty)Q^\rho_k(\infty)=0.$$
At this stage since $\sum_{k=1}^KQ^\rho_k(\infty)=1$, there is only one index $u$ such that $Q^\rho_u(\infty)=1$ and all other are equal to zero. This determines the random variable $\Gamma$ which is valued in $\{1,\ldots,K\}$. 

 Now, in order to finish the proof, we just need to see that for all $k=1,\ldots,K$
$$\mathbb P^\rho(\Gamma=k)=\mathbb P^\rho(Q_k^\rho(\infty))=1=\mathbb E^\rho[Q_k^\rho(\infty)]=Q_k^\rho(0),$$
by the martingale property.
\end{proof}

\end{document}